\newtheorem{theorem}{Theorem}
\newtheorem{definition}{Definition}
\newtheorem{lemma}{Lemma}
\theoremstyle{definition}
\newtheorem{remark}[theorem]{Remark}
\newcommand{\f}{\forall}
\newcommand{\e}{\exists}
\newcommand{\ot}{\otimes}
\newcommand{\mcl}{\mathcal}
\newcommand{\lam}{\lambda}
\newcommand{\ep}{\varepsilon}
\newcommand{\supp}{\operatorname{supp}}
\newcommand{\diag}{\operatorname{diag}}
\newcommand{\macro}{\mathfrak{m}}
\def\mE{\mathcal{E}}
\def\mP{\mathcal{P}}
\def\sH{\mathcal{H}}
\def\openone{\mathds{1}}
\newcommand{\povm}[1]{\boldsymbol{#1}}
\renewcommand{\set}[1]{\mathcal{#1}}
\newcommand{\ketbra}[1]{|#1\rangle\!\!\; \langle #1 |}
\newcommand{\tr}[1]{\operatorname{Tr}\!\left[#1\right]}
\newcommand{\norm}[1]{{\left\|#1\right\|}}
\renewcommand{\>}{\rangle}
\renewcommand{\ge}{\geqslant}
\renewcommand{\le}{\leqslant}
\renewcommand{\geq}{\geqslant}
\renewcommand{\leq}{\leqslant}
\begin{document}

\title{On the generic increase of observational entropy in isolated systems}

\author{Teruaki Nagasawa}\email{teruaki.nagasawa@nagoya-u.jp}
\affiliation{Department of Mathematical Informatics, Nagoya University, Furo-cho Chikusa-ku, Nagoya 464-8601, Japan}

\author{Kohtaro Kato}\email{kokato@i.nagoya-u.ac.jp}
\affiliation{Department of Mathematical Informatics, Nagoya University, Furo-cho Chikusa-ku, Nagoya 464-8601, Japan}

\author{Eyuri Wakakuwa}\email{e.wakakuwa@gmail.com}
\affiliation{Department of Mathematical Informatics, Nagoya University, Furo-cho Chikusa-ku, Nagoya 464-8601, Japan}

\author{Francesco Buscemi}\email{buscemi@nagoya-u.jp}
\affiliation{Department of Mathematical Informatics, Nagoya University, Furo-cho Chikusa-ku, Nagoya 464-8601, Japan}

\date{\today}

\begin{abstract}
Observational entropy -- a quantity that unifies Boltzmann's entropy, Gibbs' entropy, von Neumann's macroscopic entropy, and the diagonal entropy -- has recently been argued to play a key role in a modern formulation of statistical mechanics.
Here, relying on algebraic techniques taken from Petz's theory of statistical sufficiency and on a L\'evy-type concentration bound, we prove rigorous theorems showing how the observational entropy of a system undergoing a unitary evolution chosen at random tends to increase with overwhelming probability and to reach its maximum very quickly.
More precisely, we show that for any observation that is sufficiently coarse with respect to the size of the system, regardless of the initial state of the system (be it pure or mixed), random evolution renders its state practically indistinguishable from the uniform (i.e., maximally mixed) distribution with a probability approaching one as the size of the system grows.
The same conclusion holds not only for random evolutions sampled according to the unitarily invariant Haar distribution, but also for approximate 2-designs, which are thought to provide a more physically and computationally reasonable model of random evolutions.
\end{abstract}

\maketitle

\section{Introduction}
\label{section:introduction}
John von Neumann, in his book on the mathematical foundations of quantum theory~\cite{von1955mathematical}, immediately after having introduced and operationally motivated the quantity that is now known as \textit{von Neumann entropy}, notices however that such a quantity is not the right one to consider in the context of statistical mechanics. This is because, as he writes,
\begin{quote}
[von Neumann entropy] is invariant in the normal [i.e., Hamiltonian] evolution in time of the system, and only increases with measurements -- in the classical theory (where the measurements in general played no role) it increased as a rule even with the ordinary mechanical evolution in time of the system~\cite{von1955mathematical}. (Square brackets added for clarity.)
\end{quote}
In the above passage, von Neumann presumably refers to the free expansion of an ideal gas, in which there is a strict increase in the (macroscopic) thermodynamic entropy, although the (microscopic) von Neumann entropy associated with the state of the gas does not change as the gas undergoes Hamiltonian evolution.

To meet this challenge, von Neumann proposes the concept of \textit{macroscopic entropy}, which takes into account not only the intrinsic uncertainty associated with the microscopic state of the system, but also the additional uncertainty associated with the coarse-grained, macroscopic observation with which the system is being monitored. As the gas expands, it is this latter aspect of uncertainty, arising from the limited capabilities of a macroscopic observer, that increases while the microscopic degrees of freedom evolve undisturbed.

Since von Neumann's proposal, macroscopic entropy has been largely overshadowed by its more famous -- and eponymous -- sibling. A notable exception is Wehrl's review paper~\cite{wehrl-1978-general-entropy}, where macroscopic entropy (therein referred to as \textit{coarse-grained entropy}) plays an important role. Nevertheless, von Neumann's macroscopic entropy and a generalization of it called \textit{observational entropy} have recently been the subject of renewed interest~\cite{safranek2019a,safranek2019b,safranek2021brief,buscemi2022observational,bai2023observational}, in connection with the mathematical and conceptual foundations of statistical mechanics~\cite{strasberg-winter-2021-PRX-quantum} and various applications~\cite{riera2020finite,safranek2020classical,deutsch2020probabilistic,faiez2020typical,nation2020snapshots,strasberg2021clausius,hamazaki2022speed,modak2022observational,sreeram2023witnessing,schindler2023continuity,safranek2023work,safranek2023measuring,safranek2023ergotropic}. 

von Neumann was able to bring the idea of macroscopic entropy to fruition by proving a powerful H-theorem~\cite{vonNeumann1929translation}, showing that macroscopic entropy tends to increase, even in Hamiltonian systems, and typically grows to its maximum value regardless of the initial state of the system~\cite{Tolman-2010-dover}. Unfortunately, however, von Neumann's result was later misunderstood and criticized, and was forgotten for decades. The facts are very nicely recounted in the references~\cite{Goldstein2010Commentary,Goldstein2010NormalTypicality}, which also have the merit of bringing von Neumann's theorem back to the attention of the community.

In this paper, we aim to derive statements similar in spirit to von Neumann's H-theorem, but valid in the more general case of observational entropy, and informed by recent results on random quantum circuits. In particular, we study the change of observational entropy that can occur in isolated\footnote{Here we follow Ref.~\cite{strasberg-winter-2021-PRX-quantum} and call a system \textit{isolated} if it can only exchange work -- its evolution is therefore unitary, but not necessarily energy conserving.} systems.
Even in this simple case, very little is known about the dynamics of observational entropy: while Ref.~\cite{strasberg-winter-2021-PRX-quantum} has shown that the observational entropy of an isolated system initialized in a state fully known to the observer cannot decrease, the conditions for its \textit{strict} increase, which is the real crux of the problem, have not been discussed. Also, nothing is known about the \textit{generic} behavior of observational entropy, i.e., what happens if the initial state of the system is arbitrarily given and the unitary evolution is drawn at random.

In order to fill these gaps, we first provide an explicit characterization of all situations in which the observational entropy undergoes a \textit{strict} increase with time. Such a characterization relies on Petz's theory of statistical sufficiency~\cite{petz1986sufficient,petz1988sufficiency,petz2003monotonicity,jencova-petz-2006-sufficiency-survey}. We then move to the case of arbitrary initial states, for which, based on a L\'evy-type concentration bound~\cite{ledoux-2001,hayden-2006-aspects-generic-ent} that we prove for the observational entropy, we arrive at a statement similar to von Neumann's H-theorem: for any observation that is ``sufficiently coarse-grained'' with respect to the size of the system, under the action of \textit{Haar-random evolution}, the observational entropy approaches its maximum, i.e., the state of the system becomes practically indistinguishable from the maximally mixed (uniform) one, regardless of the state it started from.

Nevertheless, both physical and computational insights indicate that, rather than the Haar-random unitary model, a more reasonable model for random evolutions is provided by \textit{approximate 2-designs}, i.e., finite sets of unitary operators which, when sampled at random, are indistinguishable from a Haar-random evolution up to the second moment. On the one hand, from a physical point of view, approximate 2-designs turn out to be closely related to the \textit{eigenstate thermalization hypothesis}: see, e.g., Ref.~\cite{kaneko2020characterizing} and references therein. On the other hand, from a computational point of view, it is known that while the Haar distribution requires exponentially many random quantum gates, approximate 2-designs can be generated by extremely shallow (i.e., polylog-depth) random quantum circuits~\cite{schuster2024random,laracuente2024approximate}. Motivated by these arguments, we therefore consider the concentration of the observational entropy under the action of approximate 2-designs and, by specializing several derandomization techniques~\cite{dankert-2009-2-designs,low2009large,brandao2021models,harrow2023approximate}, we show that the observational entropy also reaches its maximum value in this case, regardless of the initial state of the system. In this case, since approximate 2-designs can be viewed as very short random circuits, we can intuitively say that the maximum of the observational entropy is reached ``very quickly''. In considering thermalization (e.g., ETH) and entropy increase, it is important to consider chaos and coarse-graining (i.e., discard information). In this study, chaos is expressed in terms of $\ep$-approximate t-design, and coarse-graining is quantified in terms of coarseness of measurement (Definition \ref{def:coarse}).

The paper is structured as follows. First, in Section~\ref{section:background}, we introduce the notations and basic concepts that we will use in this paper. In Section~\ref{section:OE_increase_in_macroscopic_states}, we define and explicitly characterize macroscopic states and prove that under unitary time evolution, if the system is initialized in a non-trivial macroscopic state, its observational entropy strictly increases except for a zero-measure set of unitary operators. In Section~\ref{section:OE_increase_in_arbitrary_states}, we consider the case of arbitrary initial states, and using L\'evy-type concentration inequalities, we show that the observational entropy increases generically also in this case. This is proved separately for Haar-random unitary evolutions and $\ep$-approximate 2-designs.





\section{Background}
\label{section:background}
Following the standard conventions in quantum information theory~\cite{nielsen_chuang_2010,wilde_2013}, in this paper we consider a finite $d$-dimensional quantum system, with Hilbert space $\sH$, whose states are represented by density operators $\rho\ge 0$, $\tr{\rho}=1$. The maximally mixed (i.e., uniform) state is denoted $u=\openone/d$. The von Neumann entropy, i.e., the microscopic entropy, is defined by the formula $S(\rho)=-\tr{\rho\log\rho}$, which is zero if and only if the state $\rho$ is pure, i.e., a rank-one projector on some unit vector $|\varphi\>\in\sH$. Another central quantity is the Umegaki quantum relative entropy~\cite{umegaki-q-rel-ent-1961,umegaki1962conditional}, defined as $D(\rho\|\sigma)=\tr{\rho(\log\rho-\log\sigma)}$, where $\sigma>0$ is an invertible reference (or \textit{prior}) state. Whenever $\rho$ and $\sigma$ commute, the Umegaki relative entropy coincides with the Kullback--Leibler divergence~\cite{kullback1951information}. An observation (measurement) on the system is mathematically represented by a positive operator-valued measure (POVM), i.e., a family $\povm{P}=\{P_x\}_x$ of positive semi-definite operators $P_x\ge 0$, labeled by a finite set $\set{X}=\{x\}$ (the outcome set), and normalized so that $\sum_xP_x=\openone$: given the state of the system $\rho$, the expected probability of observing outcome $x$ is computed as $p_x=\tr{P_x\;\rho}$. Whenever all the elements of a POVM are projections, i.e., $P_xP_{x'}=\delta_{xx'}P_x$, we speak of a projection-valued measure, or PVM. A PVM is \textit{trivial} if one of its elements is the identity operator $\openone$ (and all the remaining elements are null). In what follows, as very often done in the literature, it will be convenient to think of an observation as a quantum-to-classical channel, i.e., a map $\mP(\cdot)=\sum_x\tr{P_x\;\cdot}\ketbra{x}$, where $|x\>$ are orthonormal vectors in an auxiliary Hilbert space with dimension equal to the size of the outcome set $\set{X}$. Finally, we recall the idea of POVM post-processing~\cite{martens1990nonideal,buscemi-2005-clean-POVMs}: given two POVMs $\povm{P}=\{P_x\}_{x\in\set{X}}$ and $\povm{Q}=\{Q_y\}_{y\in\set{Y}}$, defined on the same Hilbert space $\sH$ but with possibly different outcome sets, we write $\povm{Q}\preceq \povm{P}$ whenever there exists a conditional probability distribution $p(y|x)$ such that $Q_y=\sum_xp(y|x)P_x$, for all $y\in\set{Y}$.

\subsection{Observational entropy}
The observational entropy (OE) of a microscopic state $\rho$ with respect to a POVM $\povm{P}=\{P_x\}_{x\in\set{X}}$ is defined as~\cite{safranek2019a,safranek2019b,safranek2021brief,strasberg-winter-2021-PRX-quantum,buscemi2022observational,bai2023observational}
\begin{align*}
    S_{\povm{P}}(\rho)&=-\sum_xp_x\log\frac{p_x}{V_x}\nonumber\\
    &=\log d-D(\mP(\rho)\|\mP(u))\;,
\end{align*}
where $p_x=\tr{P_x\;\rho}$ and $V_x=\tr{P_x}$. Since $V_x=0$ implies $P_x=0$ and hence, in particular, $p_x=0$, without loss of generality we can consider only POVMs with $V_x>0$ for all $x\in\set{X}$, so that the OE is always finite.  von Neumann's original definition of \textit{macroscopic entropy} is limited to PVMs. The presence of both the probabilities $p_x$ and the volume terms $V_x$ suggests that OE somehow ``interpolates'' between Boltzmann's and Gibbs' entropies. Indeed, when there exists one particular $\bar{x}$ such that $p_{\bar{x}}=1$, OE recovers the Boltzmann entropy $\log V_{\bar{x}}$, whereas when the volume terms are all equal to one, OE coincides with the Gibbs entropy $-\sum_xp_x\log p_x$. In the latter case, in particular, if the POVM consists of the orthogonal projectors on the energy eigenbasis, then OE recovers what is known as \textit{diagonal entropy}~\cite{polkovnikov2011microscopic}. It is straightforward from the definition that the difference of $S_{\povm{P}}(\rho)$ from its maximum value $\log{d}$ indicates how distinguishable state $\rho$ is from $u$ by measurement $\povm{P}$.

The fundamental bound of OE, which is a consequence of the data-processing property of the Umegaki relative entropy, is $S_{\povm{P}}(\rho)\ge S(\rho)$, which holds for any choice of $\rho$ and $\povm{P}$~\cite{buscemi2022observational}. States that saturate the bound, i.e., states $\rho$ such that $S_{\povm{P}}(\rho)= S(\rho)$ are called \textit{macroscopic} for $\povm{P}$. The reason for such a name comes from the fact that the condition $S_{\povm{P}}(\rho)= S(\rho)$ holds if and only if~\cite{buscemi2022observational}
\begin{align}\label{eq:macrostate}
    \rho=\sum_x\tr{P_x\;\rho}\frac{P_x}{V_x}\;,
\end{align}
which means that the state $\rho$ can be inferred~\cite{buscemi2020thermodynamic,buscemi-scarani-2021fluctuation,aw-buscemi-scarani} only from the knowledge of the measurement $\povm{P}$ and its outcomes' statistics $p_x$. This is all and the only information that is available to the macroscopic observer~\cite{buscemi2022observational}. In this precise sense, then, states that satisfy the equality $S_{\povm{P}}(\rho)= S(\rho)$ are called macroscopic. Notice that the maximally mixed state $u$ is always macroscopic, for any choice of POVM $\povm{P}$. Moreover, $u$ always achieves the maximum value of OE, that is, $S_{\povm{P}}(u)=\log d$.

\section{Observational entropy increase in macroscopic states}
\label{section:OE_increase_in_macroscopic_states}
As anticipated in the introduction, one of the main reasons to consider OE is that it can increase even in isolated systems, in contrast to von Neumann entropy, which instead remains constant. Motivated by Ref.~\cite{strasberg-winter-2021-PRX-quantum}, we begin our study by considering the behavior of OE when the initial state of the system is macroscopic. Such an assumption will be lifted in the rest of the paper.

Let us thus consider an isolated system evolving in time from $t=t_0$ to $t=t_1>t_0$. Let $\rho_0$ be the initial state of the system, $U$ describe the time evolution from $t_0$ to $t_1$, and $\rho_1=U\rho_0U^\dag$ be the state of the system at $t_1$. Let us also assume that, at time $t_0$, the system's state is macroscopic for $\povm{P}$. While the von Neumann entropy $S(\rho_t)$ of the system remains constant (as a consequence of the fact that the von Neumann entropy only depends on the spectrum of the density operator, which does not change under unitary transformations), for the OE we have:
\begin{align}
    S_{\povm{P}}(\rho_1)&=-\sum_x\tr{P_x\;\rho_1}\log\frac{\tr{P_x\;\rho_1}}{\tr{P_x}}\nonumber\\
    &=\sum_x\tr{U^\dag P_xU\;\rho_0}\log\frac{\tr{U^\dag P_xU\;\rho_0}}{\tr{U^\dag P_xU}}\nonumber\\
    &=S_{U^\dag \povm{P}U}(\rho_0)\nonumber\\
    &\ge S(\rho_0)=S_{\povm{P}}(\rho_0)=S(\rho_1)\;.\label{eq:inequality-u}
\end{align}
The final inequality holds because $\rho_0$ is macroscopic for $\povm{P}$, but may not be so for $U^\dag\povm{P}U$. Thus, from the above, we immediately see that:
\begin{enumerate}[label=\roman*)]
    \item the OE of an isolated system starting in a macroscopic state never decreases (cfr. Lemma~5 in~\cite{strasberg-winter-2021-PRX-quantum});
    \item it remains constant if and only if $\rho_1$ is \textit{also} macroscopic for the \textit{same} $\povm{P}$ as $\rho_0$.
\end{enumerate}
Given that, the question that we want to consider now is: when does the OE \textit{strictly} increase? In order to answer this question, we first need to provide a characterization of all macroscopic states associated with a given POVM $\povm{P}=\{P_x\}$. While Eq.~\eqref{eq:macrostate} provides an implicit characterization, the following theorem provides it \textit{explicitly}.
\begin{theorem}[Macroscopic states]\label{th:macro-uniform}
    Given a POVM $\povm{P}=\{P_x\}$, a state $\macro$ is macroscopic for $\povm{P}$, i.e., satisfies Eq.~\eqref{eq:macrostate}, if and only if there exists a PVM $\povm{\Pi}=\{\Pi_y\}_y$, with $\povm{\Pi}\preceq \povm{P}$, together with coefficients $c_y\ge 0$, such that
    \begin{align}\label{eq:macro-explicit}
        \macro=\sum_yc_y\Pi_y\;.
    \end{align}
\end{theorem}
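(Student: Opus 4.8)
The plan is to recognize Eq.~\eqref{eq:macrostate} as a fixed-point equation and then exploit the algebraic structure of the corresponding fixed-point set. Concretely, I would introduce the map $\Phi(A)=\sum_x\tr{P_xA}\,P_x/V_x$, which is the Petz recovery channel of the pair $(u,\mP)$ composed with $\mP$; one checks directly that $\Phi(\openone)=\openone$, that $\tr{\Phi(A)}=\tr{A}$, and that $\Phi$ is completely positive (for instance by exhibiting Kraus operators built from spectral decompositions of the $P_x$), so $\Phi$ is a unital CPTP map whose range lies in $\sn\{P_x\}$. By construction, a state $\macro$ is macroscopic for $\povm{P}$, i.e.\ satisfies Eq.~\eqref{eq:macrostate}, if and only if $\Phi(\macro)=\macro$, and in that case $\macro\in\sn\{P_x\}$ automatically.

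For the ``if'' direction, suppose $\macro=\sum_yc_y\Pi_y$ with $\povm{\Pi}=\{\Pi_y\}_y$ a PVM and $\Pi_y=\sum_xp(y|x)P_x$ (assuming, as we may, that $V_x>0$ for all $x$). I would first show that the post-processing is deterministic: if $p(y|x)>0$ then $0\le p(y|x)P_x\le\Pi_y$ forces $\Image P_x\subseteq\Image\Pi_y$, and since the $\Pi_y$ have pairwise orthogonal ranges this occurs for exactly one $y=:f(x)$, with $p(f(x)|x)=1$, $\Pi_{f(x)}P_x=P_x=P_x\Pi_{f(x)}$, and $\Pi_yP_x=0$ for $y\neq f(x)$; in particular $\Pi_y=\sum_{x:f(x)=y}P_x$. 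Then $\tr{P_x\macro}=\sum_yc_y\tr{P_x\Pi_y}=c_{f(x)}\tr{P_x}=c_{f(x)}V_x$, whence $\sum_x\tr{P_x\macro}\,P_x/V_x=\sum_xc_{f(x)}P_x=\sum_yc_y\Pi_y=\macro$, which is Eq.~\eqref{eq:macrostate}.

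For the ``only if'' direction, assume $\Phi(\macro)=\macro$. The crucial structural input is that the fixed-point set $\mathcal{F}:=\{A:\Phi(A)=A\}$ of a unital CPTP map is a unital $*$-subalgebra of $\mathcal{B}(\sH)$: for $A\in\mathcal{F}$ the Kadison--Schwarz inequality gives $\Phi(A^\dag A)\ge\Phi(A)^\dag\Phi(A)=A^\dag A$, and comparing traces (trace preservation) forces equality, so $A$ lies in the multiplicative domain of $\Phi$; closure of $\mathcal{F}$ under products then follows from the identity $\Phi(AB)=\Phi(A)\Phi(B)$ valid on the multiplicative domain. Hence $\mathcal{F}$ is stable under the functional calculus of its self-adjoint elements, so the spectral projections $Q_0,Q_1,\dots$ of $\macro=\sum_i\lambda_iQ_i$ (with $\lambda_i\ge0$, $\sum_iQ_i=\openone$, and $Q_0$ the projector onto $\ker\macro$) all lie in $\mathcal{F}$. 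Finally, each $Q_i=\Phi(Q_i)=\sum_x\tr{P_xQ_i}\,P_x/V_x$, and the weights $r(i|x):=\tr{P_xQ_i}/V_x$ satisfy $r(i|x)\ge0$ (trace of a product of positive operators) and $\sum_ir(i|x)=\tr{P_x}/V_x=1$, so $\povm{\Pi}:=\{Q_i\}_i$ is a PVM with $\povm{\Pi}\preceq\povm{P}$, and $\macro=\sum_ic_iQ_i$ with $c_i:=\lambda_i\ge0$, as claimed.

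The channel properties of $\Phi$ and the two short computations are routine; the step that demands the most care is establishing that $\mathcal{F}$ is a $*$-algebra, since this is precisely what lets macroscopicity of $\macro$ propagate to its spectral projections. I expect this to be the main obstacle, although it is a standard consequence of the Kadison--Schwarz inequality once $\Phi$ has been identified as the right object; after that, the canonical choice of weights $\tr{P_xQ_i}/V_x$ makes matching the abstract $*$-algebra structure with the POVM $\povm{P}$ immediate.
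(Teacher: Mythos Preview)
Your proof is correct. The ``if'' direction matches the paper's: both arguments establish that any post-processing of $\povm{P}$ into a PVM must be deterministic (the paper states this as a separate lemma, your image-containment argument is an equally clean way to see it), and then the verification of Eq.~\eqref{eq:macrostate} is the same direct computation.

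The ``only if'' direction, however, is genuinely different. The paper invokes a structural result from Petz's theory of statistical sufficiency (Section~4.2 of \cite{petz2003monotonicity}): equality in the data-processing inequality $D(\macro\|u)=D(\mP(\macro)\|\mP(u))$ forces $\macro\,\Pi_y=c_y\Pi_y$, where the $\Pi_y$ are built by partitioning the operators $\diag_{\macro}(P_x)$ into mutually orthogonal groups, and one then checks that $\Pi_y=\sum_{x\in\set X_y}P_x$. Your route instead stays entirely at the level of the unital CPTP map $\Phi$: Kadison--Schwarz plus trace preservation makes the fixed-point set a unital $*$-algebra, so the spectral projections $Q_i$ of $\macro$ are themselves fixed, and the coefficients $r(i|x)=\tr{P_xQ_i}/V_x$ give the post-processing witnessing $\{Q_i\}\preceq\povm{P}$ for free. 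Your argument is more self-contained (no external citation needed beyond the standard multiplicative-domain machinery) and produces a canonical PVM---the spectral PVM of $\macro$ itself---whereas the paper's construction is in general coarser and non-unique. The paper's approach, on the other hand, makes the link to statistical sufficiency explicit, which is thematically relevant to the rest of the article.
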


The full proof of Theorem~\ref{th:macro-uniform}, which is based on the theory of statistical sufficiency~\cite{petz2003monotonicity,jencova-petz-2006-sufficiency-survey}, can be found in Appendix~\ref{appendix:characterization}; here we only comment on its consequences. We first note that, since the maximally mixed state $u$ is macroscopic for any POVM but remains invariant, the interesting situations, i.e., those in which a strict increase of OE can occur, may arise only if non-uniform macroscopic states exist. As a consequence of Theorem~\ref{th:macro-uniform}, a necessary condition for a state $\macro$ to be macroscopic for a POVM $\povm{P}=\{P_x\}_{x\in\set{X}}$, is that $[\macro,P_x]=0$ for all $x\in\set{X}$. This fact, whose proof can be found in Appendix~\ref{appendix:characterization}, immediately tells us that, in the case of an isolated system initially prepared in a non-uniform macroscopic state, only a very restricted set of unitary operators, i.e., those that satisfy the conservation-like relation
\begin{align*}
    [U\macro U^\dag,P_x]=0\;,\quad\forall x\;,
\end{align*}
can preserve the observer's information about the system, whereas a generic evolution, such as one uniformly sampled from the entire set of unitary operators, will necessarily cause a strict increase in OE. In such cases, although the microscopic evolution is perfectly reversible, from the macroscopic observer's point of view, information is irreversibly lost.

\section{Observational entropy increase in arbitrary states}
\label{section:OE_increase_in_arbitrary_states}
In the above discussion, we used algebraic arguments to treat the case of isolated systems that are initially prepared in non-uniform macroscopic states. In what follows, we instead apply measure-theoretic ideas, in particular L\'evy-type concentration bounds, to show that, under appropriate assumptions, no matter what the initial state of the system is (macroscopic or not, pure or mixed), if the observation is ``sufficiently coarse'' with respect to the size of the system, the probability that a random unitary evolution will bring the OE of the system close to its maximum value $\log d$, so that the state of the system becomes macroscopically indistinguishable from the uniform distribution, is close to one.

In order to formalize this intuition we begin with the simplest, although highly idealized, case of a random evolution sampled from the Haar (unitarily invariant) distribution. (Proof in Appendix~\ref{appendix:concentration}.)

\begin{theorem}[Haar-random case]\label{th:concentration}
    Let us consider a $d$-dimensional system in an arbitrary (but fixed) state $\rho$, a POVM $\povm{P}=\{P_x\}$ with a finite number of outcomes, and a value $\delta >0$. For a unitary operator $U$ sampled at random according to the Haar (unitarily invariant) distribution, the probability that the system's observational entropy $S_{\povm{P}}(U\rho U^\dag)$ is $\delta$-far from the maximum value $\log d$ can be bounded as follows:
    \begin{align}
        &\mathbb{P}_{H}\left\{S_{\povm{P}}(U\rho U^\dag)\le(1-\delta)\log d\right\}\nonumber\\
        &\qquad\qquad\le \frac{4}{\kappa(\povm{P})}e^{-C\delta\kappa(\povm{P})^{2}d\log{d}}\;,
        \label{eq:concentration}
    \end{align}
    where $\kappa(\povm{P})=\min_x\tr{P_x\;u}=\frac1d\min_xV_x$ and $C=\frac{1}{18\pi^3}\approx0.0018>2^{-10}$.
\end{theorem}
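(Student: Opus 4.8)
The plan is to prove the concentration bound \eqref{eq:concentration} by combining Lévy's lemma on the sphere $\mathbb{S}^{2d-1}\cong \mathcal{U}(d)/\mathcal{U}(d-1)$ with a Lipschitz estimate for the map $U\mapsto S_{\povm{P}}(U\rho U^\dag)$, together with a computation of the Haar average of this quantity. Since the Haar measure on $\mathcal{U}(d)$ projects onto the uniform measure on the sphere when we track the action $U\mapsto U|\psi\>$ for a fixed reference vector, it is cleanest to first reduce to the case of pure input states and then handle mixed $\rho$ by purification/convexity. Concretely, write $S_{\povm{P}}(U\rho U^\dag)=\log d - D(\mathcal{P}(U\rho U^\dag)\,\|\,\mathcal{P}(u))$, so the event in question is $D(\mathcal{P}(U\rho U^\dag)\,\|\,\mathcal{P}(u))\ge \delta\log d$, and I will bound the probability of this large-deviation event.

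First I would establish that $f(U):=S_{\povm{P}}(U\rho U^\dag)$, viewed as a function on the unitary group with the Hilbert--Schmidt (or operator-norm-induced) metric, is Lipschitz with a constant controlled by $\kappa(\povm{P})^{-1}$. The key point is that each outcome probability $p_x(U)=\tr{U^\dagger P_x U\,\rho}$ is Lipschitz in $U$ with constant $\le 2\|P_x\|_\infty\le 2$, and the function $t\mapsto -t\log(t/V_x)$ has derivative $-\log(t/V_x)-1$, which blows up only as $t\to 0$; bounding the logarithmic factor requires knowing $p_x$ stays away from $0$, but in fact one only needs the crude estimate that $|{-t\log(t/V_x)-(-s\log(s/V_x))}|$ is controlled on $[0,1]$ up to a factor $\log(1/V_x)\le \log(1/(d\kappa(\povm{P})))$, wait—more carefully, since $V_x\ge d\kappa(\povm{P})$ one gets $|\partial_t(-t\log(t/V_x))|\le |\log t|+\log V_x+1$ and the $|\log t|$ singularity is integrable, yielding an $L$-Lipschitz bound for $f$ with $L=O(\sqrt{d}\,\kappa(\povm{P})^{-1})$ after summing over outcomes and using $\sum_x V_x = d$. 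Second, I would compute or lower-bound the Haar expectation $\mathbb{E}_H[f(U)]$: using $\mathbb{E}_H[U\rho U^\dagger]=u$ together with concavity of $S_{\povm{P}}$ and a second-moment (variance) estimate, one shows $\mathbb{E}_H[f(U)]\ge (1-\delta/2)\log d$ once $d$ is large relative to the number of outcomes and $\kappa(\povm{P})$—this is where the combinatorics of the $2$-design moments of the Haar measure enter, giving $\mathbb{E}_H[p_x^2]=\frac{V_x(V_x+1)}{d(d+1)}$ for pure $\rho$ so that $\mathbb{E}_H[p_x]=V_x/d$ with variance $O(V_x/d^2)$, hence $\mathcal{P}(U\rho U^\dagger)$ concentrates around $\mathcal{P}(u)$ and the relative entropy is small in expectation. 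Third, Lévy's lemma (as in Ref.~\cite{hayden-2006-aspects-generic-ent}) gives $\mathbb{P}_H\{f(U)\le \mathbb{E}_H[f(U)]-t\}\le 2\exp(-C' d t^2/L^2)$; substituting $t=\tfrac{\delta}{2}\log d$ and $L^2=O(d\kappa(\povm{P})^{-2})$ produces the exponent $-C\delta^2\kappa(\povm{P})^2 \log d$, and tuning constants and the prefactor (the $4/\kappa(\povm{P})$ absorbs the union-bound/covering cost and the case where the mean estimate is used in linearized form) yields exactly \eqref{eq:concentration}—modulo the discrepancy that the stated bound has $\delta$ rather than $\delta^2$ in the exponent, which suggests the authors either linearize using $D\ge$ (something linear in the trace distance via Pinsker in the reverse direction, i.e. a lower bound on $D$ away from the event) or exploit that on the complement event $D\ge\delta\log d$ forces $\|\mathcal{P}(U\rho U^\dagger)-\mathcal{P}(u)\|_1$ bounded below by a constant, converting a linear-in-$\delta$ deviation in trace distance into the claimed exponent.

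The main obstacle I anticipate is the Lipschitz constant: the naive derivative of $t\mapsto -t\log(t/V_x)$ is unbounded near $t=0$, so a crude Lipschitz bound fails, and one must either (a) restrict attention to the region where all $p_x$ are bounded below—justified because the Haar measure puts exponentially small mass on the complement, but this circularity needs care—or (b) use that $S_{\povm{P}}$ is a sum of terms each of which, while not globally Lipschitz, has a modulus of continuity of the form $\omega(\epsilon)=\epsilon\log(1/\epsilon)$, and feed a sharper "Lipschitz-up-to-a-log" version of Lévy's lemma; handling this cleanly while keeping the constant $C=1/(18\pi^3)$ explicit is the delicate part. A secondary subtlety is the reduction from mixed $\rho$ to pure states: since $S_{\povm{P}}$ is concave in its argument, $\mathbb{E}_H[S_{\povm{P}}(U\rho U^\dagger)]$ for mixed $\rho$ dominates the pure-state case only if one is careful about which direction concavity goes—in fact writing $\rho=\sum_j \lambda_j|\psi_j\>\<\psi_j|$ and $U\rho U^\dagger=\sum_j\lambda_j U|\psi_j\>\<\psi_j|U^\dagger$ one should instead track the whole $\mathcal{P}(U\rho U^\dagger)$ and use that each $p_x(U)=\sum_j\lambda_j \<\psi_j|U^\dagger P_x U|\psi_j\>$ has the same mean $V_x/d$ and, by the $2$-design property applied to each $|\psi_j\>$ plus Cauchy--Schwarz across $j$, a variance still $O(V_x/d^2)$, so the concentration is if anything stronger for mixed states. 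Modulo these two points, the argument is a fairly standard "measure concentration $+$ expectation bound" package.
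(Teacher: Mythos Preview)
Your overall strategy---apply L\'evy directly to $f(U)=S_{\povm{P}}(U\rho U^\dag)$, estimate its Lipschitz constant, and estimate its Haar mean---differs substantively from the paper's, and the obstacle you flag (the log singularity making $f$ not globally Lipschitz) is exactly what the paper's route is designed to avoid. The paper never bounds the Lipschitz constant of $S_{\povm{P}}$ itself. Instead it applies L\'evy's lemma on $\mathcal{U}(d)$ to each of the \emph{linear} functions $U\mapsto\tr{U\rho U^\dag P_x}$, which are trivially $2$-Lipschitz in the Hilbert--Schmidt metric (no log anywhere), obtaining
\[
\mathbb{P}_H\bigl\{|\tr{U\rho U^\dag P_x}-\tr{uP_x}|\ge\xi\bigr\}\le 4\,e^{-d\xi^2/(18\pi^3)}\,.
\]
The passage to observational entropy then goes through the Csisz\'ar $\chi^2$ upper bound on the Kullback--Leibler divergence,
\[
D_{KL}(\mathbf p\|\mathbf q)\;\le\;\sum_x\frac{(p_x-q_x)^2}{q_x}\;\le\;\frac{|\set X|\,\max_x(p_x-q_x)^2}{\kappa(\povm P)}\,,
\]
so that the event $S_{\povm{P}}(U\rho U^\dag)\le(1-\delta)\log d$, i.e.\ $D_{KL}\ge\delta\log d$, forces $\max_x|p_x-q_x|\ge\sqrt{\kappa(\povm P)\delta\log d/|\set X|}$. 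A union bound over $x$ and the inequality $|\set X|\le 1/\kappa(\povm P)$ then give exactly~\eqref{eq:concentration}, with no need to compute $\mathbb E_H[S_{\povm P}]$ at all.

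This also explains the $\delta$ versus $\delta^2$ puzzle you raised: in your direct scheme, L\'evy with deviation $t=(\delta/2)\log d$ gives an exponent $\propto t^2\propto\delta^2$. In the paper's scheme, the square root in $\xi\propto\sqrt{\delta\log d}$ coming from the Csisz\'ar bound cancels against the $\xi^2$ in L\'evy, yielding an exponent linear in $\delta$. So even if you could patch the Lipschitz/log issue (your options (a) and (b) are plausible but not carried out), your route would give a strictly weaker bound with $\delta^2$ in the exponent and would not recover the stated constant $C=1/(18\pi^3)$. The moral is: push the concentration down to the probabilities $p_x$, not to the entropy.
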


In other words, after a random unitary evolution, under any sufficiently coarse observation, the system will essentially look as if it were in the  maximally mixed state, no matter what its actual initial state was. Looking at Eq.~\eqref{eq:concentration}, the parameter that decides whether the POVM is sufficiently coarse with respect to given dimension $d$ and tolerance $\delta$ is the parameter $\kappa(\povm{P})\in(0,1)$, which is a \textit{universal} parameter, i.e., independent of the system's initial state.
For the right-hand side of Eq.~\eqref{eq:concentration} to be small, $\kappa(\povm{P})$, which is the ratio between the smallest volume term of the POVM and the total dimension $d$, has to play well with $C$, $\delta$, and $d$. This condition is reminiscent of von Neumann's condition on the minimum volume of the phase cells in his proof of the H-theorem~\cite{vonNeumann1929translation}:
\begin{quote}
    The number of states (quantum orbits) in each phase cell has to be not only very large, but also on average quite large compared to the number of phase cells.
\end{quote}
In our notation, the role of the ``number of states in each phase cell'', that is, the minimum thereof, is played by $\min_x V_x=d\kappa(\povm{P})$, while the ``number of phase cells'' is just the number of possible outcomes, i.e., the number of the elements of the POVM $\povm{P}$. Let us denote the latter by $N(\povm{P})$.
We can thus summarize von Neumann's condition as $N(\povm{P})\ll d\kappa(\povm{P})$.

The normalization of the POVM, i.e., $\sum_xP_x=\openone$, implies that $N(\povm{P})\kappa(\povm{P})\le\sum_x\tr{P_x\;u}= 1$ or, equivalently, $N(\povm{P})\le 1/\kappa(\povm{P})$. Hence, if $1/\kappa(\povm{P})\ll d\kappa(\povm{P})$, that is, if 
\begin{align}\label{eq:von-neu-condition}
\sqrt{d}\ll d\kappa(\povm{P})=\min_x\tr{P_x}\;,
\end{align}
then von Neumann's condition is satisfied. A schematic illustration is provided in Fig.~\ref{fig:sizes}.
As we will show next, condition~\eqref{eq:von-neu-condition} is closely related to the requirement that the right-hand side in~\eqref{eq:concentration} goes to zero as $d$ goes to infinity.

\subsection{Taking the limit}
The condition~\eqref{eq:von-neu-condition} requires that $\min_x\tr{P_x}$ is much larger than $\sqrt{d}$, while remaining smaller than $d/N(\povm{P})$ (as a consequence of the normalization). One might wonder how restrictive this requirement is. As the following back-of-the-envelope calculation shows, it is not particularly onerous. This is because, already for moderate dimensions, there is a lot of room available between $\sqrt{d}$ and $d$: see Fig.~\ref{fig:sizes} for an illustration. To get an idea of the orders of magnitude we are talking about, let us consider a system comprising 128 qubits ($d=2^{128}$) and take, for example, $\delta=2^{-5}$ and $\min_x\tr{P_x}=2^{90}$, i.e., $\kappa(\povm{P})=2^{-38}$. Using these numbers, the exponent in~\eqref{eq:concentration} is bounded as
\begin{align*}
    C\delta\kappa(\povm{P})^2d\log{d}>2^{-10}2^{-5}\frac{2^{180}}{2^{128}}2^7=2^{44}\;,
\end{align*}
meaning that the probability that the OE, after a Haar-random evolution, does \textit{not} exceed $(1-2^{-5})\times 128=124$ bits is less than $2^{40}\times\exp[-2^{44}]\approx 0$.

\begin{figure}[t]
    \centering
    \includegraphics[width=0.8\linewidth]{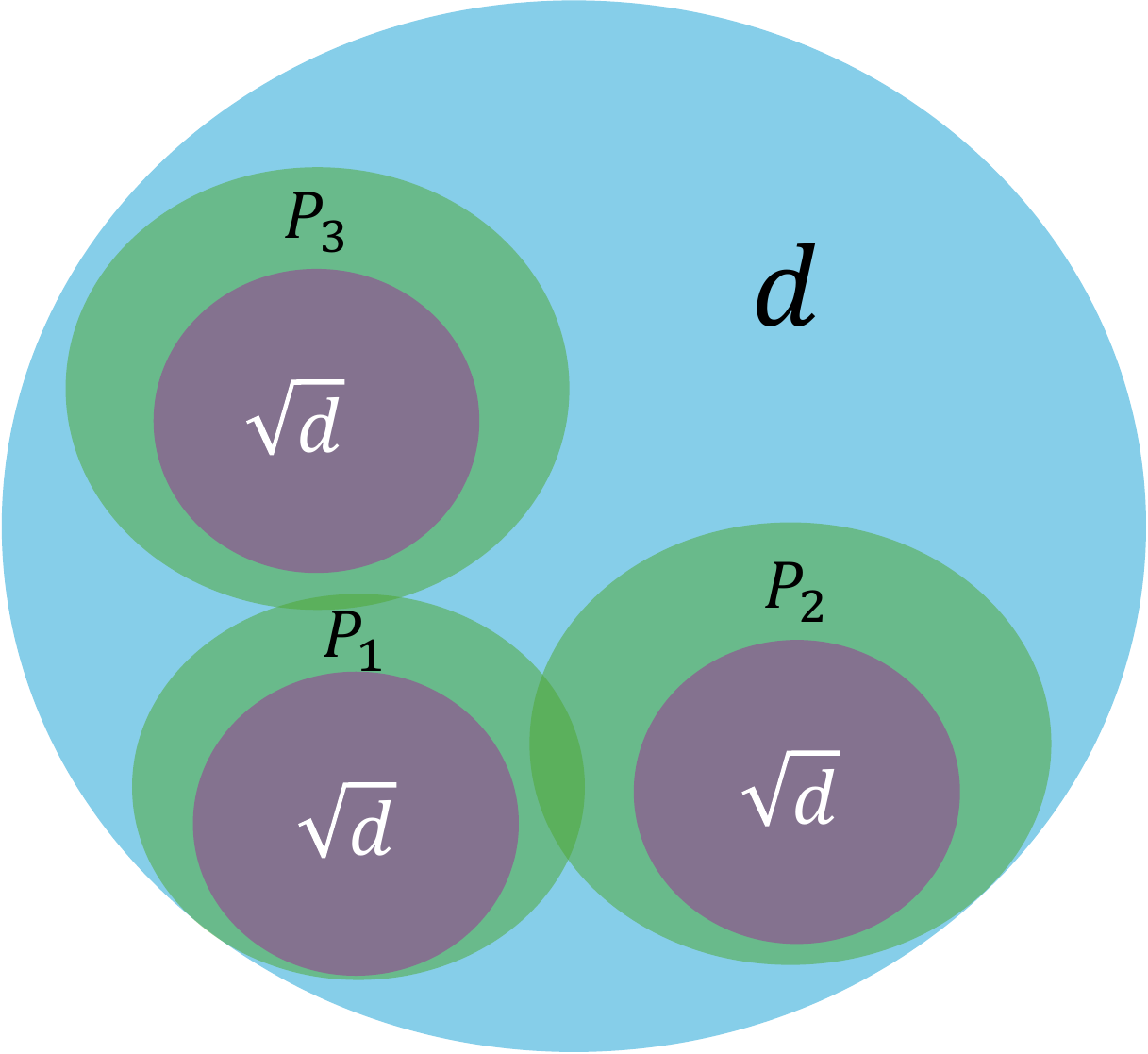}
    \caption{A schematic illustration to help visualize, in a non-rigorous but intuitive way, \textit{von Neumann's condition}, as formalized in Eq.~\eqref{eq:von-neu-condition}. The outermost light blue circle represents the total Hilbert space, which has dimension $d$. The inner green circles represent the POVM elements: the circles overlap, which means that the POVM elements need not be orthogonal. The dark gray inner core represents the von Neumann bound Eq.~\eqref{eq:von-neu-condition}: the green circles should all be ``quite larger'' than their inner cores. As $d$ grows, there is a lot of room between $d$ and $\sqrt{d}$ to accommodate many POVM elements.}
    \label{fig:sizes}
\end{figure}

While Theorem~\ref{th:concentration} above provides a bound on the probability of large deviations when all parameters (i.e., $\kappa(\povm{P})$, $d$, and $\delta$) are \textit{fixed}, a better understanding of their interplay can be obtained by looking at the asymptotic behavior of~\eqref{eq:concentration}. To this purpose, let us consider a sequence of systems and observations, in the limit of $d$ (i.e., the system's Hilbert space dimension) tending to infinity. We must first define exactly what it means for a sequence of observations to be ``asymptotically coarse''. A possible (but by no means unique) definition is as follows:
\begin{definition}[Asymptotic coarseness]\label{def:coarse}
    Consider a sequence of systems with increasing dimension $d$ and, in each system, a POVM $\povm{P}^{(d)}=\{P^{(d)}_{x_d}\}_{x_d}$. For each $d$, define $\kappa(d)\equiv\kappa(\povm{P}^{(d)})=\min_{x_d}\tr{P^{(d)}_{x_d}\;u}$. We say that the sequence of POVMs $\{\povm{P}^{(d)}\}_{d\in\mathbb{N}}$ is \emph{asymptotically coarse} whenever there exists $\tau>0$ such that
    \begin{align}
        \kappa(d)=\Omega(d^{-\frac{1}{2}+\tau})\;,
        \label{eq:kkappadorder}
    \end{align}
    i.e., whenever $\e{M}>0$ and $\e{d_{0}}$ such that
    \begin{align}
       \kappa(d)\geq M\cdot d^{-\frac{1}{2}+\tau}\;,\quad \f{d}>d_{0}\;.
    \end{align}
\end{definition}

The above definition can be justified starting from von Neumann's condition~\eqref{eq:von-neu-condition} as follows.
Let us assume that, in the limit $d\to \infty$, von Neumann's condition becomes $N(d)/d\kappa(d)\to 0$. For $\kappa(d)\sim d^{\alpha}$, and using the fact that $N(d)\kappa(d)\le1$ (consequence of the normalization of the POVM), we obtain
\begin{align*}
 \frac{N(d)}{d\;\kappa(d)}&\le   \frac1{d\;\kappa(d)^2}\\
 &\sim d^{-2\alpha-1}\;,
\end{align*}
which goes to zero if and only if $\alpha>-1/2$, in agreement with Definition~\ref{def:coarse}.

An alternative justification for Definition~\ref{def:coarse} can be derived directly from Theorem \ref{th:concentration}. For an asymptotically coarse sequence $\{\povm{P}^{(d)}\}_{d\in\mathbb{N}}$ of POVM, i.e., such that $\kappa(d)=\Omega(d^{-\frac{1}{2}+\tau})$, the right-hand side of Eq.~(\ref{eq:concentration}) is of order $d^{\frac{1}{2}-\tau}e^{-C\delta d^{2\tau}\log{d}}$, which goes to zero for any $\delta>0$ in the limit of $d\rightarrow\infty$.
Therefore, for a system of sufficiently large dimension $d$, it holds that
\begin{align}
\mathbb{P}_H\left\{\frac{S_{\povm{P}^{(d)}}(U\rho U^\dag)}{\log d}> 1-\delta\right\} \to 1\;,
\end{align}
for any $\delta>0$, in line with what would be expected from a typical, coarse macroscopic observation.

\subsection{Physical random evolutions}
The Haar distribution, while mathematically convenient, is often considered an unphysical model for random evolution, because the amount of randomness required to sample from it, even approximately, grows exponentially with the dimension of the system. To address this problem, following a recent trend in theoretical condensed matter physics -- see, e.g., Refs.~\cite{hayden-preskill-black-mirror,Nahum-PRX-2018} -- we replace the continuous Haar distribution with $\ep$-approximate 2-designs, i.e., finite sets $\mE$ of unitary operators which, if chosen at random, are able to reproduce, up to an error $\ep\ge 0$, many features of the Haar distribution that are of physical interest~\cite{dankert-2009-2-designs,low2009large}. The relevance of approximate 2-designs lies in the fact that it has recently been shown, using rigorous complexity-theoretic arguments~\cite{brandao2021models,harrow2023approximate}, that indeed $\ep$-approximate 2-designs and, more generally, $k$-designs can be efficiently implemented as short random circuits, further justifying them as a physically reasonable model for random evolutions. As shown in Appendix~\ref{appendix:concentration}, we are able to derive a more general law of generic OE increase, valid also for $\ep$-approximate 2-designs.

\begin{theorem}[Approximate 2-design case]\label{th:con1}
    For a unitary operator $U$ sampled at random from an $\ep$-approximate $2$-design $\mE$, regardless of the initial state $\rho$ of the $d$-dimensional system at hand, we have
    \begin{align}
        &\mathbb{P}_{\mE}\left\{S_{\povm{P}}(U\rho U^\dagger)\leq(1-\delta)\log{d}\right\}\nonumber\\
        &\qquad\qquad\leq\frac{1}{\kappa(\povm{P})^{3}d\log{d}}\frac{4(1+\ep)}{\delta}\;,\label{eq:OE-concentration-design}
    \end{align}
    for any value $\delta >0$.
\end{theorem}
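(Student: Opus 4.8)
The plan is: since Lévy-type concentration is not available for a finite ensemble, I would run a pure second-moment argument that exploits the fact that a $2$-design reproduces, up to the design error, exactly the second moments of the Haar measure. Because $S_{\povm{P}}(U\rho U^\dagger)=\log d-D(\mP(U\rho U^\dagger)\|\mP(u))$, the event of interest is $\{D(\mP(U\rho U^\dagger)\|\mP(u))\ge\delta\log d\}$, so that, since the relative entropy is nonnegative, Markov's inequality gives
\begin{align*}
\mathbb{P}_{\mE}\!\left\{S_{\povm{P}}(U\rho U^\dagger)\le(1-\delta)\log d\right\}\le\frac{\mathbb{E}_{\mE}\!\left[D(\mP(U\rho U^\dagger)\|\mP(u))\right]}{\delta\log d}\;,
\end{align*}
and the problem reduces to bounding the ensemble-averaged divergence.

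I would then dominate the divergence by a functional that is \emph{quadratic} in $U$, so that the $2$-design property becomes applicable. With $p_x(U)=\tr{P_x\,U\rho U^\dagger}$ and $q_x=\tr{P_x\,u}=V_x/d$, the elementary bound $D(\bp\|\bq)\le\chi^2(\bp\|\bq)=\sum_x p_x(U)^2/q_x-1$ does the job, since each $p_x(U)^2$ is a polynomial of degree two in $U$ and degree two in $\bar U$; hence $\mathbb{E}_{\mE}[p_x(U)^2]=\tr{(P_x\otimes P_x)\,\widetilde\Phi_{\mE}(\rho\otimes\rho)}$, where $\widetilde\Phi_{\mE}(X)=\mathbb{E}_{U\sim\mE}[U^{\otimes2}X(U^\dagger)^{\otimes2}]$ is the two-fold twirl of the ensemble. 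The defining property of an $\ep$-approximate $2$-design lets one replace $\widetilde\Phi_{\mE}$ by the Haar twirl $\widetilde\Phi_{H}$ at the price of a multiplicative $(1+\ep)$ and lower-order corrections, the corrections being controlled by contracting the perturbation $\widetilde\Phi_{\mE}-\widetilde\Phi_{H}$ against $\sum_x q_x^{-1}\,P_x\otimes P_x$, whose operator norm is at most $\kappa(\povm{P})^{-1}$ because $q_x\ge\kappa(\povm{P})$ and $0\le\sum_x P_x\otimes P_x\le\sum_x P_x\otimes\openone=\openone\otimes\openone$.

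The Haar average is then computed in closed form. Applying the swap-trick/Weingarten formula to $\mathbb{E}_{H}[U^{\otimes2}(\rho\otimes\rho)(U^\dagger)^{\otimes2}]$, together with $\tr{(P_x\otimes P_x)}=(\tr{P_x})^2$ and $\tr{(P_x\otimes P_x)\,\mathbb{S}}=\tr{P_x^2}\le\tr{P_x}$ (the latter because $0\le P_x\le\openone$), where $\mathbb{S}$ is the swap operator on $\sH\otimes\sH$, one obtains, uniformly in $\rho$,
\begin{align*}
\mathrm{Var}_{H}\!\big(p_x(U)\big)=\frac{d\,\tr{\rho^2}-1}{d^2-1}\left(\frac{\tr{P_x^2}}{d}-q_x^2\right)\le\frac{q_x}{d}\;.
\end{align*}
Since $\mathbb{E}_{H}[p_x(U)]=q_x$ and $\sum_x q_x=1$, the subtracted $1$ in $\chi^2$ cancels the squared-mean terms, so $\mathbb{E}_{H}[\chi^2]=\sum_x\mathrm{Var}_{H}(p_x(U))/q_x\le N(\povm{P})/d$, with $N(\povm{P})$ the number of outcomes. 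The normalization $\sum_x P_x=\openone$ forces $N(\povm{P})\,\kappa(\povm{P})\le1$, and assembling all the $\kappa(\povm{P})$-dependent factors — one inverse power from $N(\povm{P})\le\kappa(\povm{P})^{-1}$, one from the norm of $\sum_x q_x^{-1}P_x\otimes P_x$ in the design-error estimate, and a further inverse power (harmless, since $\kappa(\povm{P})\le1$) used to gather the leftover terms into a single closed form — together with the $1/d$, the $(1+\ep)$, and the $1/(\delta\log d)$ from Markov, reproduces the right-hand side of Eq.~\eqref{eq:OE-concentration-design}.

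The step I expect to be the main obstacle is precisely the replacement of $\widetilde\Phi_{\mE}$ by $\widetilde\Phi_{H}$ inside the $\chi^2$-divergence. Unlike the Haar case of Theorem~\ref{th:concentration}, there is no continuous measure here on which to run a concentration bound, so one must push the estimate through $\chi^2$, whose smallness rests on the delicate cancellation embodied in the subtracted $1$. One has to verify that the $\ep$-perturbation of the second moments does not upset this cancellation — equivalently, that the design error, once contracted against $\sum_x q_x^{-1}P_x\otimes P_x$, contributes only at an order already dominated by the Haar term — and it is the careful bookkeeping of this step that fixes both the exponent of $\kappa(\povm{P})$ and the $(1+\ep)$ prefactor in the final inequality. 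Once the right quadratic surrogate for the divergence and the Weingarten identity for the Haar two-fold twirl are in hand, the remaining manipulations are routine.
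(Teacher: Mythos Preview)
Your proposal is correct but follows a different route from the paper. The paper proves a modular lemma (Lemma~\ref{lemma:concg}) that converts any uniform per-outcome tail bound $\mathbb{P}_\nu\{|p_x(U)-q_x|\ge\xi\}\le g(\xi)$ into an OE-deviation bound, by combining the Csisz\'ar inequality $D_{KL}\le\chi^2$ with the crude estimate $\chi^2\le |\mathcal{X}|\max_x(p_x-q_x)^2/\kappa$ and a union bound over outcomes; it then supplies $g$ for designs via Chebyshev on each $|p_x-q_x|^2$, with the second moment taken from Corollary~24 of~\cite{brandao2021models} (Lemma~\ref{lemma:UPconcHt2}). Substituting $g(\xi)=4(1+\ep)/(d\xi^2)$ into Lemma~\ref{lemma:concg} produces the $\kappa(\povm{P})^{-3}$. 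You instead apply Markov once to $D_{KL}$, dominate it by $\chi^2$, and evaluate $\mathbb{E}_{\mE}[\chi^2]$ directly with Weingarten plus the diamond-norm design error. Your Haar term is $\sum_x\mathrm{Var}_H(p_x)/q_x\le N(\povm{P})/d\le 1/(\kappa d)$, and with the paper's definition of $\ep$-approximate $2$-design the additive design correction is $O(\ep/(\kappa d^4))$, hence subdominant; so your argument actually yields only $\kappa(\povm{P})^{-1}$, strictly sharper than the stated $\kappa(\povm{P})^{-3}$. Your ``three inverse powers of $\kappa$'' bookkeeping is muddled---those factors sit in separate additive terms, not multiplicatively---but since $\kappa\le1$ your bound trivially implies the theorem. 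The paper's approach buys modularity (one lemma covers both the Haar and design cases) and offloads the moment estimate to the literature; yours is self-contained, avoids the lossy max-plus-union step, and gives a tighter $\kappa$-dependence. Your closing worry about the ``delicate cancellation'' from the $-1$ in $\chi^2$ is unfounded: that constant equals $\sum_x q_x$ and is untouched by the design perturbation, which enters only through $\mathbb{E}_{\mE}[p_x^2]$.
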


The upper bound given in Eq.~\eqref{eq:OE-concentration-design} is weaker with respect to that given in Eq.~\eqref{eq:concentration}, since the negative exponential rate in $d$ that was present in~\eqref{eq:concentration} is now lost, replaced by $(d\log d)^{-1}$. This is due to the fact that the assumptions in Theorem~\ref{th:con1} are weaker, in the sense that an $\ep$-approximate $2$ design is only an approximation of the ideal, but unphysical, Haar distribution. Again, we emphasize that while $\ep$-approximate $2$-designs are physically reasonable because they can be implemented efficiently with little randomness~\cite{brandao2021models,harrow2023approximate}, the Haar distribution is more of a mathematical abstraction. 

Nevertheless, the right-hand side of Eq.~\eqref{eq:OE-concentration-design} is still very small in a large window of parameter values. For example, using the same values of the parameters that we used in the back-of-the-envelope calculation that we did for the Haar-random case, the probability that the OE, after an evolution sampled at random from an $\ep$-approximate 2-design ($\ep<1$), does not exceed 124 bits (when the maximum is 128) is again quite small:
\begin{align*}
\frac{1}{\kappa(\povm{P})^{3}d\log{d}}\frac{4(1+\ep)}{\delta}<\frac{2^{258}}{2^{277}}\times 2^{8}=2^{-11}\;.
\end{align*}
Note that, crucially, $\kappa(\povm{P})$ is again the only parameter that depends on the POVM: everything else is fixed.

More importantly, Eq.~\eqref{eq:OE-concentration-design} still allows us to prove an asymptotic result, although the notion of asymptotic coarseness given in Definition~\ref{def:coarse} must be modified with respect to that obtained before using Eq.~\eqref{eq:concentration}. Now, in the case of $\ep$-approximate 2-designs, we require that $\kappa(d)=\Omega(d^{-\frac{1}{3}+\tau})$ for  some $\tau>0$. Intuitively, this means that von Neumann's condition~\eqref{eq:von-neu-condition} has to be replaced with $d^{2/3}\ll \min_x\tr{P_x}$. In other words, asymptotic coarseness for $\ep$-approximate 2-designs is \textit{coarser} than in Definition~\ref{def:coarse}, which was introduced having in mind the case of Haar-random unitaries. But if the stricter requirement is satisfied, the right-hand side of Eq.~(\ref{eq:OE-concentration-design}) is of order $4(1+\ep)/(\delta d^{3\tau}\log d)$, which goes to zero for any $\ep,\delta>0$ in the limit of $d\rightarrow\infty$. Therefore, for any sufficiently large $d$, even in the case where $U$ is sampled from an $\ep$-approximate $2$-design $\mE$, it still holds that, for any $\delta>0$,
\begin{align}
\mathbb{P}_{\mE}\left\{\frac{S_{\povm{P}^{(d)}}(U\rho U^\dag)}{\log d}> 1-\delta\right\} \to 1\;.
\end{align}

\section{Conclusions}
In this paper we have demonstrated three ways in which observational entropy tends to increase and reach its maximum in isolated systems undergoing a generic unitary evolution.
First, we showed that if the system starts in a non-uniform macroscopic state, only unitary operators belonging to a subvariety of zero volume in the set of all unitaries keep OE invariant, otherwise OE strictly increases. Our proofs here were entirely algebraic, relying on Petz's theory of statistical sufficiency. We then moved on to the problem of showing that the increase of OE, in sufficiently large systems and for sufficiently coarse observations, is a generic phenomenon, independent of the initial state of the system. We considered both Haar-random evolutions, which give better bounds but are not physically reasonable, and $\ep$-approximate 2-designs, which give looser bounds but provide a realistic model of random physical evolutions. In both cases, we found that for sufficiently large systems and coarse observations, the state of the system quickly becomes indistinguishable from the uniform one.

An important point left open for future research is the connection between the random evolution assumption used here and the eigenstate thermalization hypothesis, which has also been shown to lead, under additional physical assumptions, to statements similar to ours~\cite{rigol-srednicki-2012,reimann-2015-PRL-thermalization,strasberg-2023-classicality}. Another open question is to see if it is possible to show OE concentration inequalities for concrete Hamiltonians, such as the free-fermion chain~\cite{tasaki2024macroscopic}, or for random matrix product states and their long time averages~\cite{haferkamp2021emergent}.

\section{Acknowledgments}

We thank Anna Jen\v{c}ov\'a and Philipp Strasberg for insightful discussions. T.~N. acknowledges the ``Nagoya University
Interdisciplinary Frontier Fellowship'' supported by Nagoya University and JST, the establishment of university fellowships towards the creation of science technology innovation, Grant Number JPMJFS2120 and ``THERS Make New Standards Program for the Next Generation Researchers'' supported by JST SPRING, Grant Number JPMJSP2125. K.~K. acknowledges support from JSPS Grant-in-Aid for Early-Career Scientists, No. 22K13972; from MEXT-JSPS Grant-in-Aid for Transformative Research Areas (A) ``Extreme Universe,” No. 22H05254. 
K.~K, E.~W. and F.~B. acknowledge support from MEXT Quantum Leap Flagship Program (MEXT QLEAP) Grant No. JPMXS0120319794.
F.~B. also acknowledges  support from MEXT-JSPS  Grant-in-Aid for Transformative Research Areas  (A) ``Extreme Universe,'' No.~21H05183, and  from JSPS  KAKENHI Grants No.~20K03746 and No.~23K03230.

\bibliography{library}

\begin{thebibliography}{58}%
\makeatletter
\providecommand \@ifxundefined [1]{%
 \@ifx{#1\undefined}
}%
\providecommand \@ifnum [1]{%
 \ifnum #1\expandafter \@firstoftwo
 \else \expandafter \@secondoftwo
 \fi
}%
\providecommand \@ifx [1]{%
 \ifx #1\expandafter \@firstoftwo
 \else \expandafter \@secondoftwo
 \fi
}%
\providecommand \natexlab [1]{#1}%
\providecommand \enquote  [1]{``#1''}%
\providecommand \bibnamefont  [1]{#1}%
\providecommand \bibfnamefont [1]{#1}%
\providecommand \citenamefont [1]{#1}%
\providecommand \href@noop [0]{\@secondoftwo}%
\providecommand \href [0]{\begingroup \@sanitize@url \@href}%
\providecommand \@href[1]{\@@startlink{#1}\@@href}%
\providecommand \@@href[1]{\endgroup#1\@@endlink}%
\providecommand \@sanitize@url [0]{\catcode `\\12\catcode `\$12\catcode `\&12\catcode `\#12\catcode `\^12\catcode `\_12\catcode `\%12\relax}%
\providecommand \@@startlink[1]{}%
\providecommand \@@endlink[0]{}%
\providecommand \url  [0]{\begingroup\@sanitize@url \@url }%
\providecommand \@url [1]{\endgroup\@href {#1}{\urlprefix }}%
\providecommand \urlprefix  [0]{URL }%
\providecommand \Eprint [0]{\href }%
\providecommand \doibase [0]{https://doi.org/}%
\providecommand \selectlanguage [0]{\@gobble}%
\providecommand \bibinfo  [0]{\@secondoftwo}%
\providecommand \bibfield  [0]{\@secondoftwo}%
\providecommand \translation [1]{[#1]}%
\providecommand \BibitemOpen [0]{}%
\providecommand \bibitemStop [0]{}%
\providecommand \bibitemNoStop [0]{.\EOS\space}%
\providecommand \EOS [0]{\spacefactor3000\relax}%
\providecommand \BibitemShut  [1]{\csname bibitem#1\endcsname}%
\let\auto@bib@innerbib\@empty
\bibitem [{\citenamefont {von Neumann}(1955)}]{von1955mathematical}%
  \BibitemOpen
  \bibfield  {author} {\bibinfo {author} {\bibfnamefont {J.}~\bibnamefont {von Neumann}},\ }\href@noop {} {\emph {\bibinfo {title} {Mathematical foundations of quantum mechanics}}}\ (\bibinfo  {publisher} {\href{http://press.princeton.edu/titles/2113.html}{Princeton university press}},\ \bibinfo {year} {1955})\BibitemShut {NoStop}%
\bibitem [{\citenamefont {Wehrl}(1978)}]{wehrl-1978-general-entropy}%
  \BibitemOpen
  \bibfield  {author} {\bibinfo {author} {\bibfnamefont {A.}~\bibnamefont {Wehrl}},\ }\bibfield  {title} {\bibinfo {title} {General properties of entropy},\ }\href {https://doi.org/10.1103/RevModPhys.50.221} {\bibfield  {journal} {\bibinfo  {journal} {Rev. Mod. Phys.}\ }\textbf {\bibinfo {volume} {50}},\ \bibinfo {pages} {221} (\bibinfo {year} {1978})}\BibitemShut {NoStop}%
\bibitem [{\citenamefont {{{\v{S}}afr{\'a}nek}}\ \emph {et~al.}(2019{\natexlab{a}})\citenamefont {{{\v{S}}afr{\'a}nek}}, \citenamefont {{Deutsch}},\ and\ \citenamefont {{Aguirre}}}]{safranek2019a}%
  \BibitemOpen
  \bibfield  {author} {\bibinfo {author} {\bibfnamefont {D.}~\bibnamefont {{{\v{S}}afr{\'a}nek}}}, \bibinfo {author} {\bibfnamefont {J.~M.}\ \bibnamefont {{Deutsch}}},\ and\ \bibinfo {author} {\bibfnamefont {A.}~\bibnamefont {{Aguirre}}},\ }\bibfield  {title} {\bibinfo {title} {{Quantum coarse-grained entropy and thermodynamics}},\ }\href {https://doi.org/10.1103/PhysRevA.99.010101} {\bibfield  {journal} {\bibinfo  {journal} {Phys. Rev. A}\ }\textbf {\bibinfo {volume} {99}},\ \bibinfo {eid} {010101} (\bibinfo {year} {2019}{\natexlab{a}})},\ \Eprint {https://arxiv.org/abs/1707.09722} {arXiv:1707.09722 [quant-ph]} \BibitemShut {NoStop}%
\bibitem [{\citenamefont {{{\v{S}}afr{\'a}nek}}\ \emph {et~al.}(2019{\natexlab{b}})\citenamefont {{{\v{S}}afr{\'a}nek}}, \citenamefont {{Deutsch}},\ and\ \citenamefont {{Aguirre}}}]{safranek2019b}%
  \BibitemOpen
  \bibfield  {author} {\bibinfo {author} {\bibfnamefont {D.}~\bibnamefont {{{\v{S}}afr{\'a}nek}}}, \bibinfo {author} {\bibfnamefont {J.~M.}\ \bibnamefont {{Deutsch}}},\ and\ \bibinfo {author} {\bibfnamefont {A.}~\bibnamefont {{Aguirre}}},\ }\bibfield  {title} {\bibinfo {title} {{Quantum coarse-grained entropy and thermalization in closed systems}},\ }\href {https://doi.org/10.1103/PhysRevA.99.012103} {\bibfield  {journal} {\bibinfo  {journal} {Phys. Rev. A}\ }\textbf {\bibinfo {volume} {99}},\ \bibinfo {eid} {012103} (\bibinfo {year} {2019}{\natexlab{b}})},\ \Eprint {https://arxiv.org/abs/1803.00665} {arXiv:1803.00665 [quant-ph]} \BibitemShut {NoStop}%
\bibitem [{\citenamefont {{{\v{S}}afr{\'a}nek}}\ \emph {et~al.}(2021)\citenamefont {{{\v{S}}afr{\'a}nek}}, \citenamefont {{Aguirre}}, \citenamefont {{Schindler}},\ and\ \citenamefont {{Deutsch}}}]{safranek2021brief}%
  \BibitemOpen
  \bibfield  {author} {\bibinfo {author} {\bibfnamefont {D.}~\bibnamefont {{{\v{S}}afr{\'a}nek}}}, \bibinfo {author} {\bibfnamefont {A.}~\bibnamefont {{Aguirre}}}, \bibinfo {author} {\bibfnamefont {J.}~\bibnamefont {{Schindler}}},\ and\ \bibinfo {author} {\bibfnamefont {J.~M.}\ \bibnamefont {{Deutsch}}},\ }\bibfield  {title} {\bibinfo {title} {{A Brief Introduction to Observational Entropy}},\ }\href {https://doi.org/10.1007/s10701-021-00498-x} {\bibfield  {journal} {\bibinfo  {journal} {Foundations of Physics}\ }\textbf {\bibinfo {volume} {51}},\ \bibinfo {eid} {101} (\bibinfo {year} {2021})},\ \Eprint {https://arxiv.org/abs/2008.04409} {arXiv:2008.04409 [quant-ph]} \BibitemShut {NoStop}%
\bibitem [{\citenamefont {Buscemi}\ \emph {et~al.}(2023)\citenamefont {Buscemi}, \citenamefont {Schindler},\ and\ \citenamefont {{\v{S}}afr{\'{a}}nek}}]{buscemi2022observational}%
  \BibitemOpen
  \bibfield  {author} {\bibinfo {author} {\bibfnamefont {F.}~\bibnamefont {Buscemi}}, \bibinfo {author} {\bibfnamefont {J.}~\bibnamefont {Schindler}},\ and\ \bibinfo {author} {\bibfnamefont {D.}~\bibnamefont {{\v{S}}afr{\'{a}}nek}},\ }\bibfield  {title} {\bibinfo {title} {Observational entropy, coarse-grained states, and the {P}etz recovery map: information-theoretic properties and bounds},\ }\href {https://doi.org/10.1088/1367-2630/accd11} {\bibfield  {journal} {\bibinfo  {journal} {New Journal of Physics}\ }\textbf {\bibinfo {volume} {25}},\ \bibinfo {pages} {053002} (\bibinfo {year} {2023})}\BibitemShut {NoStop}%
\bibitem [{\citenamefont {Bai}\ \emph {et~al.}(2023)\citenamefont {Bai}, \citenamefont {{\v S}afr{\'a}nek}, \citenamefont {Schindler}, \citenamefont {Buscemi},\ and\ \citenamefont {Scarani}}]{bai2023observational}%
  \BibitemOpen
  \bibfield  {author} {\bibinfo {author} {\bibfnamefont {G.}~\bibnamefont {Bai}}, \bibinfo {author} {\bibfnamefont {D.}~\bibnamefont {{\v S}afr{\'a}nek}}, \bibinfo {author} {\bibfnamefont {J.}~\bibnamefont {Schindler}}, \bibinfo {author} {\bibfnamefont {F.}~\bibnamefont {Buscemi}},\ and\ \bibinfo {author} {\bibfnamefont {V.}~\bibnamefont {Scarani}},\ }\href@noop {} {\bibinfo {title} {Observational entropy with general quantum priors}} (\bibinfo {year} {2023}),\ \Eprint {https://arxiv.org/abs/2308.08763} {arXiv:2308.08763 [quant-ph]} \BibitemShut {NoStop}%
\bibitem [{\citenamefont {Strasberg}\ and\ \citenamefont {Winter}(2021)}]{strasberg-winter-2021-PRX-quantum}%
  \BibitemOpen
  \bibfield  {author} {\bibinfo {author} {\bibfnamefont {P.}~\bibnamefont {Strasberg}}\ and\ \bibinfo {author} {\bibfnamefont {A.}~\bibnamefont {Winter}},\ }\bibfield  {title} {\bibinfo {title} {First and second law of quantum thermodynamics: A consistent derivation based on a microscopic definition of entropy},\ }\href {https://doi.org/10.1103/PRXQuantum.2.030202} {\bibfield  {journal} {\bibinfo  {journal} {PRX Quantum}\ }\textbf {\bibinfo {volume} {2}},\ \bibinfo {pages} {030202} (\bibinfo {year} {2021})}\BibitemShut {NoStop}%
\bibitem [{\citenamefont {Riera-Campeny}\ \emph {et~al.}(2021)\citenamefont {Riera-Campeny}, \citenamefont {Sanpera},\ and\ \citenamefont {Strasberg}}]{riera2020finite}%
  \BibitemOpen
  \bibfield  {author} {\bibinfo {author} {\bibfnamefont {A.}~\bibnamefont {Riera-Campeny}}, \bibinfo {author} {\bibfnamefont {A.}~\bibnamefont {Sanpera}},\ and\ \bibinfo {author} {\bibfnamefont {P.}~\bibnamefont {Strasberg}},\ }\bibfield  {title} {\bibinfo {title} {Quantum systems correlated with a finite bath: Nonequilibrium dynamics and thermodynamics},\ }\href {https://doi.org/10.1103/PRXQuantum.2.010340} {\bibfield  {journal} {\bibinfo  {journal} {PRX Quantum}\ }\textbf {\bibinfo {volume} {2}},\ \bibinfo {pages} {010340} (\bibinfo {year} {2021})},\ \Eprint {https://arxiv.org/abs/2008.02184} {arXiv:2008.02184 [quant-ph]} \BibitemShut {NoStop}%
\bibitem [{\citenamefont {{{\v{S}}afr{\'a}nek}}\ \emph {et~al.}(2020)\citenamefont {{{\v{S}}afr{\'a}nek}}, \citenamefont {{Aguirre}},\ and\ \citenamefont {{Deutsch}}}]{safranek2020classical}%
  \BibitemOpen
  \bibfield  {author} {\bibinfo {author} {\bibfnamefont {D.}~\bibnamefont {{{\v{S}}afr{\'a}nek}}}, \bibinfo {author} {\bibfnamefont {A.}~\bibnamefont {{Aguirre}}},\ and\ \bibinfo {author} {\bibfnamefont {J.~M.}\ \bibnamefont {{Deutsch}}},\ }\bibfield  {title} {\bibinfo {title} {{Classical dynamical coarse-grained entropy and comparison with the quantum version}},\ }\href {https://doi.org/10.1103/PhysRevE.102.032106} {\bibfield  {journal} {\bibinfo  {journal} {Phys. Rev. E}\ }\textbf {\bibinfo {volume} {102}},\ \bibinfo {eid} {032106} (\bibinfo {year} {2020})},\ \Eprint {https://arxiv.org/abs/1905.03841} {arXiv:1905.03841 [cond-mat.stat-mech]} \BibitemShut {NoStop}%
\bibitem [{\citenamefont {{Deutsch}}\ \emph {et~al.}(2020)\citenamefont {{Deutsch}}, \citenamefont {{{\v{S}}afr{\'a}nek}},\ and\ \citenamefont {{Aguirre}}}]{deutsch2020probabilistic}%
  \BibitemOpen
  \bibfield  {author} {\bibinfo {author} {\bibfnamefont {J.~M.}\ \bibnamefont {{Deutsch}}}, \bibinfo {author} {\bibfnamefont {D.}~\bibnamefont {{{\v{S}}afr{\'a}nek}}},\ and\ \bibinfo {author} {\bibfnamefont {A.}~\bibnamefont {{Aguirre}}},\ }\bibfield  {title} {\bibinfo {title} {{Probabilistic bound on extreme fluctuations in isolated quantum systems}},\ }\href {https://doi.org/10.1103/PhysRevE.101.032112} {\bibfield  {journal} {\bibinfo  {journal} {\pre}\ }\textbf {\bibinfo {volume} {101}},\ \bibinfo {eid} {032112} (\bibinfo {year} {2020})},\ \Eprint {https://arxiv.org/abs/1806.08897} {arXiv:1806.08897 [gr-qc]} \BibitemShut {NoStop}%
\bibitem [{\citenamefont {{Faiez}}\ \emph {et~al.}(2020)\citenamefont {{Faiez}}, \citenamefont {{{\v{S}}afr{\'a}nek}}, \citenamefont {{Deutsch}},\ and\ \citenamefont {{Aguirre}}}]{faiez2020typical}%
  \BibitemOpen
  \bibfield  {author} {\bibinfo {author} {\bibfnamefont {D.}~\bibnamefont {{Faiez}}}, \bibinfo {author} {\bibfnamefont {D.}~\bibnamefont {{{\v{S}}afr{\'a}nek}}}, \bibinfo {author} {\bibfnamefont {J.~M.}\ \bibnamefont {{Deutsch}}},\ and\ \bibinfo {author} {\bibfnamefont {A.}~\bibnamefont {{Aguirre}}},\ }\bibfield  {title} {\bibinfo {title} {{Typical and extreme entropies of long-lived isolated quantum systems}},\ }\href {https://doi.org/10.1103/PhysRevA.101.052101} {\bibfield  {journal} {\bibinfo  {journal} {\pra}\ }\textbf {\bibinfo {volume} {101}},\ \bibinfo {eid} {052101} (\bibinfo {year} {2020})},\ \Eprint {https://arxiv.org/abs/1908.07083} {arXiv:1908.07083 [quant-ph]} \BibitemShut {NoStop}%
\bibitem [{\citenamefont {{Nation}}\ and\ \citenamefont {{Porras}}(2020)}]{nation2020snapshots}%
  \BibitemOpen
  \bibfield  {author} {\bibinfo {author} {\bibfnamefont {C.}~\bibnamefont {{Nation}}}\ and\ \bibinfo {author} {\bibfnamefont {D.}~\bibnamefont {{Porras}}},\ }\bibfield  {title} {\bibinfo {title} {{Taking snapshots of a quantum thermalization process: Emergent classicality in quantum jump trajectories}},\ }\href {https://doi.org/10.1103/PhysRevE.102.042115} {\bibfield  {journal} {\bibinfo  {journal} {\pre}\ }\textbf {\bibinfo {volume} {102}},\ \bibinfo {eid} {042115} (\bibinfo {year} {2020})},\ \Eprint {https://arxiv.org/abs/2003.08425} {arXiv:2003.08425 [quant-ph]} \BibitemShut {NoStop}%
\bibitem [{\citenamefont {{Strasberg}}\ \emph {et~al.}(2021)\citenamefont {{Strasberg}}, \citenamefont {{D{\'\i}az}},\ and\ \citenamefont {{Riera-Campeny}}}]{strasberg2021clausius}%
  \BibitemOpen
  \bibfield  {author} {\bibinfo {author} {\bibfnamefont {P.}~\bibnamefont {{Strasberg}}}, \bibinfo {author} {\bibfnamefont {M.~G.}\ \bibnamefont {{D{\'\i}az}}},\ and\ \bibinfo {author} {\bibfnamefont {A.}~\bibnamefont {{Riera-Campeny}}},\ }\bibfield  {title} {\bibinfo {title} {{Clausius inequality for finite baths reveals universal efficiency improvements}},\ }\href {https://doi.org/10.1103/PhysRevE.104.L022103} {\bibfield  {journal} {\bibinfo  {journal} {Phys. Rev. E}\ }\textbf {\bibinfo {volume} {104}},\ \bibinfo {eid} {L022103} (\bibinfo {year} {2021})},\ \Eprint {https://arxiv.org/abs/2012.03262} {arXiv:2012.03262 [quant-ph]} \BibitemShut {NoStop}%
\bibitem [{\citenamefont {{Hamazaki}}(2022)}]{hamazaki2022speed}%
  \BibitemOpen
  \bibfield  {author} {\bibinfo {author} {\bibfnamefont {R.}~\bibnamefont {{Hamazaki}}},\ }\bibfield  {title} {\bibinfo {title} {{Speed Limits for Macroscopic Transitions}},\ }\href {https://doi.org/10.1103/PRXQuantum.3.020319} {\bibfield  {journal} {\bibinfo  {journal} {PRX Quantum}\ }\textbf {\bibinfo {volume} {3}},\ \bibinfo {eid} {020319} (\bibinfo {year} {2022})},\ \Eprint {https://arxiv.org/abs/2110.09716} {arXiv:2110.09716 [cond-mat.stat-mech]} \BibitemShut {NoStop}%
\bibitem [{\citenamefont {Modak}\ and\ \citenamefont {Aravinda}(2022)}]{modak2022observational}%
  \BibitemOpen
  \bibfield  {author} {\bibinfo {author} {\bibfnamefont {R.}~\bibnamefont {Modak}}\ and\ \bibinfo {author} {\bibfnamefont {S.}~\bibnamefont {Aravinda}},\ }\bibfield  {title} {\bibinfo {title} {Observational-entropic study of anderson localization},\ }\href {https://doi.org/10.1103/PhysRevA.106.062217} {\bibfield  {journal} {\bibinfo  {journal} {Phys. Rev. A}\ }\textbf {\bibinfo {volume} {106}},\ \bibinfo {pages} {062217} (\bibinfo {year} {2022})}\BibitemShut {NoStop}%
\bibitem [{\citenamefont {PG}\ \emph {et~al.}(2023)\citenamefont {PG}, \citenamefont {Modak},\ and\ \citenamefont {Aravinda}}]{sreeram2023witnessing}%
  \BibitemOpen
  \bibfield  {author} {\bibinfo {author} {\bibfnamefont {S.}~\bibnamefont {PG}}, \bibinfo {author} {\bibfnamefont {R.}~\bibnamefont {Modak}},\ and\ \bibinfo {author} {\bibfnamefont {S.}~\bibnamefont {Aravinda}},\ }\bibfield  {title} {\bibinfo {title} {Witnessing quantum chaos using observational entropy},\ }\href {https://doi.org/10.1103/PhysRevE.107.064204} {\bibfield  {journal} {\bibinfo  {journal} {Phys. Rev. E}\ }\textbf {\bibinfo {volume} {107}},\ \bibinfo {pages} {064204} (\bibinfo {year} {2023})}\BibitemShut {NoStop}%
\bibitem [{\citenamefont {{Schindler}}\ and\ \citenamefont {{Winter}}(2023)}]{schindler2023continuity}%
  \BibitemOpen
  \bibfield  {author} {\bibinfo {author} {\bibfnamefont {J.}~\bibnamefont {{Schindler}}}\ and\ \bibinfo {author} {\bibfnamefont {A.}~\bibnamefont {{Winter}}},\ }\bibfield  {title} {\bibinfo {title} {{Continuity bounds on observational entropy and measured relative entropies}},\ }\href {https://doi.org/10.48550/arXiv.2302.00400} {\bibfield  {journal} {\bibinfo  {journal} {arXiv e-prints}\ ,\ \bibinfo {eid} {arXiv:2302.00400}} (\bibinfo {year} {2023})},\ \Eprint {https://arxiv.org/abs/2302.00400} {arXiv:2302.00400 [quant-ph]} \BibitemShut {NoStop}%
\bibitem [{\citenamefont {{{\v{S}}afr{\'{a}}nek}}\ \emph {et~al.}(2023)\citenamefont {{{\v{S}}afr{\'{a}}nek}}, \citenamefont {Rosa},\ and\ \citenamefont {Binder}}]{safranek2023work}%
  \BibitemOpen
  \bibfield  {author} {\bibinfo {author} {\bibfnamefont {D.}~\bibnamefont {{{\v{S}}afr{\'{a}}nek}}}, \bibinfo {author} {\bibfnamefont {D.}~\bibnamefont {Rosa}},\ and\ \bibinfo {author} {\bibfnamefont {F.~C.}\ \bibnamefont {Binder}},\ }\bibfield  {title} {\bibinfo {title} {Work extraction from unknown quantum sources},\ }\href {https://doi.org/10.1103/PhysRevLett.130.210401} {\bibfield  {journal} {\bibinfo  {journal} {Phys. Rev. Lett.}\ }\textbf {\bibinfo {volume} {130}},\ \bibinfo {pages} {210401} (\bibinfo {year} {2023})}\BibitemShut {NoStop}%
\bibitem [{\citenamefont {{{\v{S}}afr{\'{a}}nek}}\ and\ \citenamefont {{Rosa}}(2023)}]{safranek2023measuring}%
  \BibitemOpen
  \bibfield  {author} {\bibinfo {author} {\bibfnamefont {D.}~\bibnamefont {{{\v{S}}afr{\'{a}}nek}}}\ and\ \bibinfo {author} {\bibfnamefont {D.}~\bibnamefont {{Rosa}}},\ }\bibfield  {title} {\bibinfo {title} {Measuring energy by measuring any other observable},\ }\href {https://doi.org/10.1103/PhysRevA.108.022208} {\bibfield  {journal} {\bibinfo  {journal} {Phys. Rev. A}\ }\textbf {\bibinfo {volume} {108}},\ \bibinfo {pages} {022208} (\bibinfo {year} {2023})}\BibitemShut {NoStop}%
\bibitem [{\citenamefont {{{\v{S}}afr{\'a}nek}}(2023)}]{safranek2023ergotropic}%
  \BibitemOpen
  \bibfield  {author} {\bibinfo {author} {\bibfnamefont {D.}~\bibnamefont {{{\v{S}}afr{\'a}nek}}},\ }\bibfield  {title} {\bibinfo {title} {{Ergotropic interpretation of entanglement entropy}},\ }\href {https://doi.org/10.48550/arXiv.2306.08987} {\bibfield  {journal} {\bibinfo  {journal} {arXiv e-prints}\ ,\ \bibinfo {eid} {arXiv:2306.08987}} (\bibinfo {year} {2023})},\ \Eprint {https://arxiv.org/abs/2306.08987} {arXiv:2306.08987 [quant-ph]} \BibitemShut {NoStop}%
\bibitem [{\citenamefont {von Neumann}(2010)}]{vonNeumann1929translation}%
  \BibitemOpen
  \bibfield  {author} {\bibinfo {author} {\bibfnamefont {J.}~\bibnamefont {von Neumann}},\ }\bibfield  {title} {\bibinfo {title} {{Proof of the ergodic theorem and the H-theorem in quantum mechanics. Translation of: Beweis des Ergodensatzes und des H-Theorems in der neuen Mechanik}},\ }\href {https://doi.org/10.1140/epjh/e2010-00008-5} {\bibfield  {journal} {\bibinfo  {journal} {European Physical Journal H}\ }\textbf {\bibinfo {volume} {35}},\ \bibinfo {pages} {201} (\bibinfo {year} {2010})}\BibitemShut {NoStop}%
\bibitem [{\citenamefont {Tolman}(2010)}]{Tolman-2010-dover}%
  \BibitemOpen
  \bibfield  {author} {\bibinfo {author} {\bibfnamefont {R.~C.}\ \bibnamefont {Tolman}},\ }\href@noop {} {\emph {\bibinfo {title} {The Principles of Statistical Mechanics}}}\ (\bibinfo  {publisher} {Dover Publications},\ \bibinfo {year} {2010})\BibitemShut {NoStop}%
\bibitem [{\citenamefont {Goldstein}\ \emph {et~al.}(2010{\natexlab{a}})\citenamefont {Goldstein}, \citenamefont {Lebowitz}, \citenamefont {Tumulka},\ and\ \citenamefont {Zangh{\`\i}}}]{Goldstein2010Commentary}%
  \BibitemOpen
  \bibfield  {author} {\bibinfo {author} {\bibfnamefont {S.}~\bibnamefont {Goldstein}}, \bibinfo {author} {\bibfnamefont {J.~L.}\ \bibnamefont {Lebowitz}}, \bibinfo {author} {\bibfnamefont {R.}~\bibnamefont {Tumulka}},\ and\ \bibinfo {author} {\bibfnamefont {N.}~\bibnamefont {Zangh{\`\i}}},\ }\bibfield  {title} {\bibinfo {title} {Long-time behavior of macroscopic quantum systems: Commentary accompanying the english translation of john von neumann's 1929 article on the quantum ergodic theorem},\ }\href {https://doi.org/10.1140/epjh/e2010-00007-7} {\bibfield  {journal} {\bibinfo  {journal} {The European Physical Journal H}\ }\textbf {\bibinfo {volume} {35}},\ \bibinfo {pages} {173} (\bibinfo {year} {2010}{\natexlab{a}})}\BibitemShut {NoStop}%
\bibitem [{\citenamefont {Goldstein}\ \emph {et~al.}(2010{\natexlab{b}})\citenamefont {Goldstein}, \citenamefont {Lebowitz}, \citenamefont {Mastrodonato}, \citenamefont {Tumulka},\ and\ \citenamefont {Zangh{\`\i}}}]{Goldstein2010NormalTypicality}%
  \BibitemOpen
  \bibfield  {author} {\bibinfo {author} {\bibfnamefont {S.}~\bibnamefont {Goldstein}}, \bibinfo {author} {\bibfnamefont {J.~L.}\ \bibnamefont {Lebowitz}}, \bibinfo {author} {\bibfnamefont {C.}~\bibnamefont {Mastrodonato}}, \bibinfo {author} {\bibfnamefont {R.}~\bibnamefont {Tumulka}},\ and\ \bibinfo {author} {\bibfnamefont {N.}~\bibnamefont {Zangh{\`\i}}},\ }\bibfield  {title} {\bibinfo {title} {Normal typicality and von neumann's quantum ergodic theorem},\ }\href {https://doi.org/10.1098/rspa.2009.0635} {\bibfield  {journal} {\bibinfo  {journal} {Proceedings of the Royal Society A: Mathematical, Physical and Engineering Sciences}\ }\textbf {\bibinfo {volume} {466}},\ \bibinfo {pages} {3203} (\bibinfo {year} {2010}{\natexlab{b}})}\BibitemShut {NoStop}%
\bibitem [{\citenamefont {Petz}(1986)}]{petz1986sufficient}%
  \BibitemOpen
  \bibfield  {author} {\bibinfo {author} {\bibfnamefont {D.}~\bibnamefont {Petz}},\ }\bibfield  {title} {\bibinfo {title} {Sufficient subalgebras and the relative entropy of states of a von neumann algebra},\ }\href {https://doi.org/https://doi.org/10.1007/BF01212345} {\bibfield  {journal} {\bibinfo  {journal} {Communications in mathematical physics}\ }\textbf {\bibinfo {volume} {105}},\ \bibinfo {pages} {123} (\bibinfo {year} {1986})}\BibitemShut {NoStop}%
\bibitem [{\citenamefont {Petz}(1988)}]{petz1988sufficiency}%
  \BibitemOpen
  \bibfield  {author} {\bibinfo {author} {\bibfnamefont {D.}~\bibnamefont {Petz}},\ }\bibfield  {title} {\bibinfo {title} {Sufficiency of channels over von neumann algebras},\ }\href {https://doi.org/https://doi.org/10.1093/qmath/39.1.97} {\bibfield  {journal} {\bibinfo  {journal} {The Quarterly Journal of Mathematics}\ }\textbf {\bibinfo {volume} {39}},\ \bibinfo {pages} {97} (\bibinfo {year} {1988})}\BibitemShut {NoStop}%
\bibitem [{\citenamefont {Petz}(2003)}]{petz2003monotonicity}%
  \BibitemOpen
  \bibfield  {author} {\bibinfo {author} {\bibfnamefont {D.}~\bibnamefont {Petz}},\ }\bibfield  {title} {\bibinfo {title} {Monotonicity of quantum relative entropy revisited},\ }\href {https://doi.org/https://doi.org/10.1142/S0129055X03001576} {\bibfield  {journal} {\bibinfo  {journal} {Reviews in Mathematical Physics}\ }\textbf {\bibinfo {volume} {15}},\ \bibinfo {pages} {79} (\bibinfo {year} {2003})}\BibitemShut {NoStop}%
\bibitem [{\citenamefont {Jen{\v c}ov{\'a}}\ and\ \citenamefont {Petz}(2006)}]{jencova-petz-2006-sufficiency-survey}%
  \BibitemOpen
  \bibfield  {author} {\bibinfo {author} {\bibfnamefont {A.}~\bibnamefont {Jen{\v c}ov{\'a}}}\ and\ \bibinfo {author} {\bibfnamefont {D.}~\bibnamefont {Petz}},\ }\bibfield  {title} {\bibinfo {title} {Sufficiency in quantum statistical inference: a survey with examples},\ }\href {https://doi.org/10.1142/s0219025706002408} {\bibfield  {journal} {\bibinfo  {journal} {Infinite Dimensional Analysis, Quantum Probability and Related Topics}\ }\textbf {\bibinfo {volume} {09}},\ \bibinfo {pages} {331} (\bibinfo {year} {2006})}\BibitemShut {NoStop}%
\bibitem [{\citenamefont {Ledoux}(2001)}]{ledoux-2001}%
  \BibitemOpen
  \bibfield  {author} {\bibinfo {author} {\bibfnamefont {M.}~\bibnamefont {Ledoux}},\ }\href@noop {} {\emph {\bibinfo {title} {The concentration of measure phenomenon}}}\ (\bibinfo  {publisher} {American Mathematical Society},\ \bibinfo {year} {2001})\BibitemShut {NoStop}%
\bibitem [{\citenamefont {Hayden}\ \emph {et~al.}(2006)\citenamefont {Hayden}, \citenamefont {Leung},\ and\ \citenamefont {Winter}}]{hayden-2006-aspects-generic-ent}%
  \BibitemOpen
  \bibfield  {author} {\bibinfo {author} {\bibfnamefont {P.}~\bibnamefont {Hayden}}, \bibinfo {author} {\bibfnamefont {D.~W.}\ \bibnamefont {Leung}},\ and\ \bibinfo {author} {\bibfnamefont {A.}~\bibnamefont {Winter}},\ }\bibfield  {title} {\bibinfo {title} {Aspects of generic entanglement},\ }\href {https://doi.org/10.1007/s00220-006-1535-6} {\bibfield  {journal} {\bibinfo  {journal} {Communications in Mathematical Physics}\ }\textbf {\bibinfo {volume} {265}},\ \bibinfo {pages} {95} (\bibinfo {year} {2006})}\BibitemShut {NoStop}%
\bibitem [{\citenamefont {Kaneko}\ \emph {et~al.}(2020)\citenamefont {Kaneko}, \citenamefont {Iyoda},\ and\ \citenamefont {Sagawa}}]{kaneko2020characterizing}%
  \BibitemOpen
  \bibfield  {author} {\bibinfo {author} {\bibfnamefont {K.}~\bibnamefont {Kaneko}}, \bibinfo {author} {\bibfnamefont {E.}~\bibnamefont {Iyoda}},\ and\ \bibinfo {author} {\bibfnamefont {T.}~\bibnamefont {Sagawa}},\ }\bibfield  {title} {\bibinfo {title} {Characterizing complexity of many-body quantum dynamics by higher-order eigenstate thermalization},\ }\bibfield  {journal} {\bibinfo  {journal} {Physical Review A}\ }\textbf {\bibinfo {volume} {101}},\ \href {https://doi.org/10.1103/physreva.101.042126} {10.1103/physreva.101.042126} (\bibinfo {year} {2020})\BibitemShut {NoStop}%
\bibitem [{\citenamefont {Schuster}\ \emph {et~al.}(2024)\citenamefont {Schuster}, \citenamefont {Haferkamp},\ and\ \citenamefont {Huang}}]{schuster2024random}%
  \BibitemOpen
  \bibfield  {author} {\bibinfo {author} {\bibfnamefont {T.}~\bibnamefont {Schuster}}, \bibinfo {author} {\bibfnamefont {J.}~\bibnamefont {Haferkamp}},\ and\ \bibinfo {author} {\bibfnamefont {H.-Y.}\ \bibnamefont {Huang}},\ }\bibfield  {title} {\bibinfo {title} {Random unitaries in extremely low depth},\ }\href@noop {} {\bibfield  {journal} {\bibinfo  {journal} {arXiv preprint arXiv:2407.07754}\ } (\bibinfo {year} {2024})}\BibitemShut {NoStop}%
\bibitem [{\citenamefont {LaRacuente}\ and\ \citenamefont {Leditzky}(2024)}]{laracuente2024approximate}%
  \BibitemOpen
  \bibfield  {author} {\bibinfo {author} {\bibfnamefont {N.}~\bibnamefont {LaRacuente}}\ and\ \bibinfo {author} {\bibfnamefont {F.}~\bibnamefont {Leditzky}},\ }\bibfield  {title} {\bibinfo {title} {Approximate unitary $ k $-designs from shallow, low-communication circuits},\ }\href@noop {} {\bibfield  {journal} {\bibinfo  {journal} {arXiv preprint arXiv:2407.07876}\ } (\bibinfo {year} {2024})}\BibitemShut {NoStop}%
\bibitem [{\citenamefont {Dankert}\ \emph {et~al.}(2009)\citenamefont {Dankert}, \citenamefont {Cleve}, \citenamefont {Emerson},\ and\ \citenamefont {Livine}}]{dankert-2009-2-designs}%
  \BibitemOpen
  \bibfield  {author} {\bibinfo {author} {\bibfnamefont {C.}~\bibnamefont {Dankert}}, \bibinfo {author} {\bibfnamefont {R.}~\bibnamefont {Cleve}}, \bibinfo {author} {\bibfnamefont {J.}~\bibnamefont {Emerson}},\ and\ \bibinfo {author} {\bibfnamefont {E.}~\bibnamefont {Livine}},\ }\bibfield  {title} {\bibinfo {title} {Exact and approximate unitary 2-designs and their application to fidelity estimation},\ }\href {https://doi.org/10.1103/PhysRevA.80.012304} {\bibfield  {journal} {\bibinfo  {journal} {Phys. Rev. A}\ }\textbf {\bibinfo {volume} {80}},\ \bibinfo {pages} {012304} (\bibinfo {year} {2009})}\BibitemShut {NoStop}%
\bibitem [{\citenamefont {Low}(2009)}]{low2009large}%
  \BibitemOpen
  \bibfield  {author} {\bibinfo {author} {\bibfnamefont {R.~A.}\ \bibnamefont {Low}},\ }\bibfield  {title} {\bibinfo {title} {Large deviation bounds for k-designs},\ }\href {https://doi.org/10.1098/rspa.2009.0232} {\bibfield  {journal} {\bibinfo  {journal} {Proceedings of the Royal Society A: Mathematical, Physical and Engineering Sciences}\ }\textbf {\bibinfo {volume} {465}},\ \bibinfo {pages} {3289} (\bibinfo {year} {2009})}\BibitemShut {NoStop}%
\bibitem [{\citenamefont {Brandão}\ \emph {et~al.}(2021)\citenamefont {Brandão}, \citenamefont {Chemissany}, \citenamefont {Hunter-Jones}, \citenamefont {Kueng},\ and\ \citenamefont {Preskill}}]{brandao2021models}%
  \BibitemOpen
  \bibfield  {author} {\bibinfo {author} {\bibfnamefont {F.~G.}\ \bibnamefont {Brandão}}, \bibinfo {author} {\bibfnamefont {W.}~\bibnamefont {Chemissany}}, \bibinfo {author} {\bibfnamefont {N.}~\bibnamefont {Hunter-Jones}}, \bibinfo {author} {\bibfnamefont {R.}~\bibnamefont {Kueng}},\ and\ \bibinfo {author} {\bibfnamefont {J.}~\bibnamefont {Preskill}},\ }\bibfield  {title} {\bibinfo {title} {Models of quantum complexity growth},\ }\bibfield  {journal} {\bibinfo  {journal} {PRX Quantum}\ }\textbf {\bibinfo {volume} {2}},\ \href {https://doi.org/10.1103/prxquantum.2.030316} {10.1103/prxquantum.2.030316} (\bibinfo {year} {2021})\BibitemShut {NoStop}%
\bibitem [{\citenamefont {Harrow}\ and\ \citenamefont {Mehraban}(2023)}]{harrow2023approximate}%
  \BibitemOpen
  \bibfield  {author} {\bibinfo {author} {\bibfnamefont {A.~W.}\ \bibnamefont {Harrow}}\ and\ \bibinfo {author} {\bibfnamefont {S.}~\bibnamefont {Mehraban}},\ }\bibfield  {title} {\bibinfo {title} {Approximate unitary t-designs by short random quantum circuits using nearest-neighbor and long-range gates},\ }\href {https://doi.org/10.1007/s00220-023-04675-z} {\bibfield  {journal} {\bibinfo  {journal} {Communications in Mathematical Physics}\ }\textbf {\bibinfo {volume} {401}},\ \bibinfo {pages} {1531–1626} (\bibinfo {year} {2023})}\BibitemShut {NoStop}%
\bibitem [{\citenamefont {Nielsen}\ and\ \citenamefont {Chuang}(2010)}]{nielsen_chuang_2010}%
  \BibitemOpen
  \bibfield  {author} {\bibinfo {author} {\bibfnamefont {M.~A.}\ \bibnamefont {Nielsen}}\ and\ \bibinfo {author} {\bibfnamefont {I.~L.}\ \bibnamefont {Chuang}},\ }\href {https://doi.org/https://doi.org/10.1017/CBO9780511976667} {\emph {\bibinfo {title} {Quantum Computation and Quantum Information: 10th Anniversary Edition}}}\ (\bibinfo  {publisher} {Cambridge University Press},\ \bibinfo {year} {2010})\BibitemShut {NoStop}%
\bibitem [{\citenamefont {Wilde}(2013)}]{wilde_2013}%
  \BibitemOpen
  \bibfield  {author} {\bibinfo {author} {\bibfnamefont {M.~M.}\ \bibnamefont {Wilde}},\ }\href {https://doi.org/10.1017/CBO9781139525343} {\emph {\bibinfo {title} {Quantum Information Theory}}}\ (\bibinfo  {publisher} {Cambridge University Press},\ \bibinfo {year} {2013})\BibitemShut {NoStop}%
\bibitem [{\citenamefont {Umegaki}(1961)}]{umegaki-q-rel-ent-1961}%
  \BibitemOpen
  \bibfield  {author} {\bibinfo {author} {\bibfnamefont {H.}~\bibnamefont {Umegaki}},\ }\bibfield  {title} {\bibinfo {title} {On information in operator algebras},\ }\href {https://doi.org/10.3792/pja/1195523632} {\bibfield  {journal} {\bibinfo  {journal} {Proc. Japan Acad.}\ }\textbf {\bibinfo {volume} {37}},\ \bibinfo {pages} {459} (\bibinfo {year} {1961})}\BibitemShut {NoStop}%
\bibitem [{\citenamefont {Umegaki}(1962)}]{umegaki1962conditional}%
  \BibitemOpen
  \bibfield  {author} {\bibinfo {author} {\bibfnamefont {H.}~\bibnamefont {Umegaki}},\ }\bibfield  {title} {\bibinfo {title} {Conditional expectation in an operator algebra, iv (entropy and information)},\ }\href@noop {} {\bibfield  {journal} {\bibinfo  {journal} {Kodai Mathematical Seminar Reports}\ }\textbf {\bibinfo {volume} {14}},\ \bibinfo {pages} {59} (\bibinfo {year} {1962})}\BibitemShut {NoStop}%
\bibitem [{\citenamefont {Kullback}\ and\ \citenamefont {Leibler}(1951)}]{kullback1951information}%
  \BibitemOpen
  \bibfield  {author} {\bibinfo {author} {\bibfnamefont {S.}~\bibnamefont {Kullback}}\ and\ \bibinfo {author} {\bibfnamefont {R.~A.}\ \bibnamefont {Leibler}},\ }\bibfield  {title} {\bibinfo {title} {{On information and sufficiency}},\ }\href {https://projecteuclid.org/euclid.aoms/1177729694} {\bibfield  {journal} {\bibinfo  {journal} {{Ann. Math. Stat.}}\ }\textbf {\bibinfo {volume} {22}},\ \bibinfo {pages} {79} (\bibinfo {year} {1951})}\BibitemShut {NoStop}%
\bibitem [{\citenamefont {{Martens}}\ and\ \citenamefont {{de Muynck}}(1990)}]{martens1990nonideal}%
  \BibitemOpen
  \bibfield  {author} {\bibinfo {author} {\bibfnamefont {H.}~\bibnamefont {{Martens}}}\ and\ \bibinfo {author} {\bibfnamefont {W.~M.}\ \bibnamefont {{de Muynck}}},\ }\bibfield  {title} {\bibinfo {title} {{Nonideal quantum measurements}},\ }\href {https://doi.org/10.1007/BF00731693} {\bibfield  {journal} {\bibinfo  {journal} {Foundations of Physics}\ }\textbf {\bibinfo {volume} {20}},\ \bibinfo {pages} {255} (\bibinfo {year} {1990})}\BibitemShut {NoStop}%
\bibitem [{\citenamefont {Buscemi}\ \emph {et~al.}(2005)\citenamefont {Buscemi}, \citenamefont {Keyl}, \citenamefont {D'Ariano}, \citenamefont {Perinotti},\ and\ \citenamefont {Werner}}]{buscemi-2005-clean-POVMs}%
  \BibitemOpen
  \bibfield  {author} {\bibinfo {author} {\bibfnamefont {F.}~\bibnamefont {Buscemi}}, \bibinfo {author} {\bibfnamefont {M.}~\bibnamefont {Keyl}}, \bibinfo {author} {\bibfnamefont {G.~M.}\ \bibnamefont {D'Ariano}}, \bibinfo {author} {\bibfnamefont {P.}~\bibnamefont {Perinotti}},\ and\ \bibinfo {author} {\bibfnamefont {R.~F.}\ \bibnamefont {Werner}},\ }\bibfield  {title} {\bibinfo {title} {Clean positive operator valued measures},\ }\href {https://doi.org/10.1063/1.2008996} {\bibfield  {journal} {\bibinfo  {journal} {Journal of Mathematical Physics}\ }\textbf {\bibinfo {volume} {46}},\ \bibinfo {pages} {082109} (\bibinfo {year} {2005})},\ \Eprint {https://arxiv.org/abs/https://doi.org/10.1063/1.2008996} {https://doi.org/10.1063/1.2008996} \BibitemShut {NoStop}%
\bibitem [{\citenamefont {Polkovnikov}(2011)}]{polkovnikov2011microscopic}%
  \BibitemOpen
  \bibfield  {author} {\bibinfo {author} {\bibfnamefont {A.}~\bibnamefont {Polkovnikov}},\ }\bibfield  {title} {\bibinfo {title} {Microscopic diagonal entropy and its connection to basic thermodynamic relations},\ }\href {https://doi.org/10.1016/j.aop.2010.08.004} {\bibfield  {journal} {\bibinfo  {journal} {Annals of Physics}\ }\textbf {\bibinfo {volume} {326}},\ \bibinfo {pages} {486–499} (\bibinfo {year} {2011})}\BibitemShut {NoStop}%
\bibitem [{\citenamefont {Buscemi}\ \emph {et~al.}(2020)\citenamefont {Buscemi}, \citenamefont {Fujiwara}, \citenamefont {Mitsui},\ and\ \citenamefont {Rotondo}}]{buscemi2020thermodynamic}%
  \BibitemOpen
  \bibfield  {author} {\bibinfo {author} {\bibfnamefont {F.}~\bibnamefont {Buscemi}}, \bibinfo {author} {\bibfnamefont {D.}~\bibnamefont {Fujiwara}}, \bibinfo {author} {\bibfnamefont {N.}~\bibnamefont {Mitsui}},\ and\ \bibinfo {author} {\bibfnamefont {M.}~\bibnamefont {Rotondo}},\ }\bibfield  {title} {\bibinfo {title} {Thermodynamic reverse bounds for general open quantum processes},\ }\href@noop {} {\bibfield  {journal} {\bibinfo  {journal} {Physical Review A}\ }\textbf {\bibinfo {volume} {102}},\ \bibinfo {pages} {032210} (\bibinfo {year} {2020})}\BibitemShut {NoStop}%
\bibitem [{\citenamefont {Buscemi}\ and\ \citenamefont {Scarani}(2021)}]{buscemi-scarani-2021fluctuation}%
  \BibitemOpen
  \bibfield  {author} {\bibinfo {author} {\bibfnamefont {F.}~\bibnamefont {Buscemi}}\ and\ \bibinfo {author} {\bibfnamefont {V.}~\bibnamefont {Scarani}},\ }\bibfield  {title} {\bibinfo {title} {Fluctuation theorems from bayesian retrodiction},\ }\href {https://doi.org/https://doi.org/10.1103/PhysRevE.103.052111} {\bibfield  {journal} {\bibinfo  {journal} {Phys. Rev. E}\ }\textbf {\bibinfo {volume} {103}},\ \bibinfo {pages} {052111} (\bibinfo {year} {2021})}\BibitemShut {NoStop}%
\bibitem [{\citenamefont {Aw}\ \emph {et~al.}(2021)\citenamefont {Aw}, \citenamefont {Buscemi},\ and\ \citenamefont {Scarani}}]{aw-buscemi-scarani}%
  \BibitemOpen
  \bibfield  {author} {\bibinfo {author} {\bibfnamefont {C.~C.}\ \bibnamefont {Aw}}, \bibinfo {author} {\bibfnamefont {F.}~\bibnamefont {Buscemi}},\ and\ \bibinfo {author} {\bibfnamefont {V.}~\bibnamefont {Scarani}},\ }\bibfield  {title} {\bibinfo {title} {Fluctuation theorems with retrodiction rather than reverse processes},\ }\href {https://doi.org/https://doi.org/10.1116/5.0060893} {\bibfield  {journal} {\bibinfo  {journal} {AVS Quantum Science}\ }\textbf {\bibinfo {volume} {3}},\ \bibinfo {pages} {045601} (\bibinfo {year} {2021})}\BibitemShut {NoStop}%
\bibitem [{\citenamefont {Hayden}\ and\ \citenamefont {Preskill}(2007)}]{hayden-preskill-black-mirror}%
  \BibitemOpen
  \bibfield  {author} {\bibinfo {author} {\bibfnamefont {P.}~\bibnamefont {Hayden}}\ and\ \bibinfo {author} {\bibfnamefont {J.}~\bibnamefont {Preskill}},\ }\bibfield  {title} {\bibinfo {title} {Black holes as mirrors: quantum information in random subsystems},\ }\href {https://doi.org/10.1088/1126-6708/2007/09/120} {\bibfield  {journal} {\bibinfo  {journal} {Journal of High Energy Physics}\ }\textbf {\bibinfo {volume} {2007}},\ \bibinfo {pages} {120} (\bibinfo {year} {2007})}\BibitemShut {NoStop}%
\bibitem [{\citenamefont {Nahum}\ \emph {et~al.}(2018)\citenamefont {Nahum}, \citenamefont {Vijay},\ and\ \citenamefont {Haah}}]{Nahum-PRX-2018}%
  \BibitemOpen
  \bibfield  {author} {\bibinfo {author} {\bibfnamefont {A.}~\bibnamefont {Nahum}}, \bibinfo {author} {\bibfnamefont {S.}~\bibnamefont {Vijay}},\ and\ \bibinfo {author} {\bibfnamefont {J.}~\bibnamefont {Haah}},\ }\bibfield  {title} {\bibinfo {title} {Operator spreading in random unitary circuits},\ }\href {https://doi.org/10.1103/PhysRevX.8.021014} {\bibfield  {journal} {\bibinfo  {journal} {Phys. Rev. X}\ }\textbf {\bibinfo {volume} {8}},\ \bibinfo {pages} {021014} (\bibinfo {year} {2018})}\BibitemShut {NoStop}%
\bibitem [{\citenamefont {Rigol}\ and\ \citenamefont {Srednicki}(2012)}]{rigol-srednicki-2012}%
  \BibitemOpen
  \bibfield  {author} {\bibinfo {author} {\bibfnamefont {M.}~\bibnamefont {Rigol}}\ and\ \bibinfo {author} {\bibfnamefont {M.}~\bibnamefont {Srednicki}},\ }\bibfield  {title} {\bibinfo {title} {Alternatives to eigenstate thermalization},\ }\href {https://doi.org/10.1103/PhysRevLett.108.110601} {\bibfield  {journal} {\bibinfo  {journal} {Phys. Rev. Lett.}\ }\textbf {\bibinfo {volume} {108}},\ \bibinfo {pages} {110601} (\bibinfo {year} {2012})}\BibitemShut {NoStop}%
\bibitem [{\citenamefont {Reimann}(2015)}]{reimann-2015-PRL-thermalization}%
  \BibitemOpen
  \bibfield  {author} {\bibinfo {author} {\bibfnamefont {P.}~\bibnamefont {Reimann}},\ }\bibfield  {title} {\bibinfo {title} {Generalization of von neumann's approach to thermalization},\ }\href {https://doi.org/10.1103/PhysRevLett.115.010403} {\bibfield  {journal} {\bibinfo  {journal} {Phys. Rev. Lett.}\ }\textbf {\bibinfo {volume} {115}},\ \bibinfo {pages} {010403} (\bibinfo {year} {2015})}\BibitemShut {NoStop}%
\bibitem [{\citenamefont {Strasberg}\ \emph {et~al.}(2023)\citenamefont {Strasberg}, \citenamefont {Winter}, \citenamefont {Gemmer},\ and\ \citenamefont {Wang}}]{strasberg-2023-classicality}%
  \BibitemOpen
  \bibfield  {author} {\bibinfo {author} {\bibfnamefont {P.}~\bibnamefont {Strasberg}}, \bibinfo {author} {\bibfnamefont {A.}~\bibnamefont {Winter}}, \bibinfo {author} {\bibfnamefont {J.}~\bibnamefont {Gemmer}},\ and\ \bibinfo {author} {\bibfnamefont {J.}~\bibnamefont {Wang}},\ }\bibfield  {title} {\bibinfo {title} {Classicality, markovianity, and local detailed balance from pure-state dynamics},\ }\href {https://doi.org/10.1103/PhysRevA.108.012225} {\bibfield  {journal} {\bibinfo  {journal} {Phys. Rev. A}\ }\textbf {\bibinfo {volume} {108}},\ \bibinfo {pages} {012225} (\bibinfo {year} {2023})}\BibitemShut {NoStop}%
\bibitem [{\citenamefont {Tasaki}(2024)}]{tasaki2024macroscopic}%
  \BibitemOpen
  \bibfield  {author} {\bibinfo {author} {\bibfnamefont {H.}~\bibnamefont {Tasaki}},\ }\bibfield  {title} {\bibinfo {title} {Macroscopic irreversibility in quantum systems: Eth and equilibration in a free fermion chain},\ }\href@noop {} {\bibfield  {journal} {\bibinfo  {journal} {arXiv preprint arXiv:2401.15263}\ } (\bibinfo {year} {2024})}\BibitemShut {NoStop}%
\bibitem [{\citenamefont {Haferkamp}\ \emph {et~al.}(2021)\citenamefont {Haferkamp}, \citenamefont {Bertoni}, \citenamefont {Roth},\ and\ \citenamefont {Eisert}}]{haferkamp2021emergent}%
  \BibitemOpen
  \bibfield  {author} {\bibinfo {author} {\bibfnamefont {J.}~\bibnamefont {Haferkamp}}, \bibinfo {author} {\bibfnamefont {C.}~\bibnamefont {Bertoni}}, \bibinfo {author} {\bibfnamefont {I.}~\bibnamefont {Roth}},\ and\ \bibinfo {author} {\bibfnamefont {J.}~\bibnamefont {Eisert}},\ }\bibfield  {title} {\bibinfo {title} {Emergent statistical mechanics from properties of disordered random matrix product states},\ }\bibfield  {journal} {\bibinfo  {journal} {PRX Quantum}\ }\textbf {\bibinfo {volume} {2}},\ \href {https://doi.org/10.1103/prxquantum.2.040308} {10.1103/prxquantum.2.040308} (\bibinfo {year} {2021})\BibitemShut {NoStop}%
\bibitem [{\citenamefont {Csisz{\'a}r}\ and\ \citenamefont {Talata}(2006)}]{csiszar2006context}%
  \BibitemOpen
  \bibfield  {author} {\bibinfo {author} {\bibfnamefont {I.}~\bibnamefont {Csisz{\'a}r}}\ and\ \bibinfo {author} {\bibfnamefont {Z.}~\bibnamefont {Talata}},\ }\bibfield  {title} {\bibinfo {title} {Context tree estimation for not necessarily finite memory processes, via bic and mdl},\ }\href {https://doi.org/10.1109/TIT.2005.864431} {\bibfield  {journal} {\bibinfo  {journal} {IEEE Transactions on Information theory}\ }\textbf {\bibinfo {volume} {52}},\ \bibinfo {pages} {1007} (\bibinfo {year} {2006})}\BibitemShut {NoStop}%
\bibitem [{\citenamefont {Milman}\ and\ \citenamefont {Schechtman}(1986)}]{milman1986asymptotic}%
  \BibitemOpen
  \bibfield  {author} {\bibinfo {author} {\bibfnamefont {V.~D.}\ \bibnamefont {Milman}}\ and\ \bibinfo {author} {\bibfnamefont {G.}~\bibnamefont {Schechtman}},\ }\href@noop {} {\emph {\bibinfo {title} {Asymptotic theory of finite dimensional normed spaces: Isoperimetric inequalities in riemannian manifolds}}},\ Vol.\ \bibinfo {volume} {1200}\ (\bibinfo  {publisher} {Springer Science \& Business Media},\ \bibinfo {year} {1986})\BibitemShut {NoStop}%
\end{thebibliography}%

\clearpage
\onecolumngrid

\setcounter{equation}{0}
\renewcommand{\theequation}{S.\arabic{equation}}

\appendix

\section{Characterization of the macroscopic states (Proof of Theorem \ref{th:macro-uniform})}
\label{appendix:characterization}

\begin{lemma}\label{lemma:groupings}
    Suppose that $\povm{Q}=\{Q_y\}_{y\in\set{Y}}$ is a PVM satisfying $Q_y Q_{y'} = \delta_{yy'} Q_y$ for all $y, y' \in \set{Y}$. Suppose also that there exists another POVM $\povm{P}=\{P_x\}_{x\in\set{X}}$ such that $\povm{Q}\preceq \povm{P}$. Then, the post-processing transforming $\povm{P}$ into $\povm{Q}$ is deterministic, i.e., $p(y|x)\in\{0,1\}$ for all $x$ and $y$.
\end{lemma}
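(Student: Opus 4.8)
The plan is to exploit the interplay between the idempotency of the PVM elements $Q_y$ and the convex-combination structure forced by the post-processing relation $\povm{Q}\preceq\povm{P}$. Write $Q_y=\sum_x p(y|x)P_x$ for each $y\in\set{Y}$, where $\{p(y|x)\}_y$ is a probability distribution for each fixed $x$. The key observation is that for each $x$, $P_x$ is a positive semidefinite operator, and $\sum_x P_x=\openone$, so every $P_x\le\openone$; moreover $Q_y\le\openone$ as well since it is a projection. The idea is to show that the only way a projection can be written as $\sum_x p(y|x)P_x$ with all $P_x\ge0$ summing to $\openone$ is if each coefficient $p(y|x)$ is $0$ or $1$.

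First I would fix a unit vector $|\psi\>$ in the range of $P_x$ for some particular $x$ (assuming $P_x\ne0$; if $P_x=0$ the coefficients $p(y|x)$ are irrelevant). Since $\sum_{x'}P_{x'}=\openone$ and each $P_{x'}\ge0$, evaluating on $|\psi\>$ gives $\<\psi|P_x|\psi\>\le1$ with equality forcing $\<\psi|P_{x'}|\psi\>=0$, hence $P_{x'}|\psi\>=0$, for all $x'\ne x$; in general $\<\psi|P_x|\psi\>\le1$. A cleaner route: consider the operator inequality $0\le Q_y\le\openone$ together with $Q_y^2=Q_y$, which means $Q_y(\openone-Q_y)=0$. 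Now compute $\<\psi|Q_y|\psi\>$ for $|\psi\>$ in the support of $P_x$. We have $\<\psi|Q_y|\psi\>=\sum_{x'}p(y|x')\<\psi|P_{x'}|\psi\>$. I would then use the two extreme cases: summing over $y$ recovers $\sum_y Q_y=\openone$ consistently, but the real leverage is that a projection's expectation values, restricted appropriately, must be extremal. The crispest argument: since $Q_y$ is a projection, for any $|\psi\>$ we have $\<\psi|Q_y|\psi\>\in[0,1]$, and if $|\psi\>$ lies in the range of $P_x$, I want to show this expectation is forced to be exactly $p(y|x)$ plus something that pins it to $\{0,1\}$.

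Here is the step I expect to carry the weight. Consider the ``complementary'' PVM element: since $\povm{Q}$ is a PVM, $\openone-Q_y=\sum_{y'\ne y}Q_{y'}$ is also a projection, and $Q_y(\openone-Q_y)=0$ as operators. Sandwiching this identity: $0=\<\psi|Q_y(\openone-Q_y)|\psi\>=\|(\openone-Q_y)^{1/2}Q_y^{1/2}|\psi\>\|^2$ — but more usefully, $Q_y|\psi\>$ and $(\openone-Q_y)|\psi\>$ are orthogonal with squared norms $\<\psi|Q_y|\psi\>$ and $1-\<\psi|Q_y|\psi\>$. Now I substitute $Q_y=\sum_x p(y|x)P_x$ and $\openone-Q_y=\sum_x(1-p(y|x))P_x$. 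Taking the expectation in a state $\rho$ supported on the range of $P_x$ (e.g.\ $\rho=|\psi\>\<\psi|$ with $|\psi\>$ an eigenvector of $P_x$ with positive eigenvalue — actually I would choose $\rho$ within the range of $P_x$ so that $\tr{P_{x'}\rho}$ is controlled): the cross term $\tr{Q_y(\openone-Q_y)\rho}=0$ expands into $\sum_{x',x''}p(y|x')(1-p(y|x''))\tr{P_{x'}P_{x''}\rho}$. To make this tractable I would instead diagonalize simultaneously where possible, or — cleanest of all — use that $\tr{Q_y\,\rho}=\sum_x p(y|x)\tr{P_x\,\rho}$ defines a post-processing of the probability vector $\bp=(\tr{P_x\,\rho})_x$ into $\bq=(\tr{Q_y\,\rho})_y$, and since $\povm{Q}$ is sharp, $\tr{Q_y\,\rho}=\tr{Q_y^2\,\rho}$, which via Cauchy–Schwarz / the Pythagorean split above forces $\tr{Q_y\,\rho}\in\{0,1\}$ whenever $\rho$ is pure — giving $\sum_x p(y|x)\tr{P_x\,\rho}\in\{0,1\}$ for every pure $\rho$. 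Choosing $\rho=|\psi\>\<\psi|$ ranging over the range of a fixed $P_x$, and using a continuity/extreme-point argument (the left side is a continuous function of $|\psi\>$ taking values in a two-point set, hence constant on the connected range of $P_x$, and by varying which $P_x$ dominates one isolates $p(y|x)$), I conclude $p(y|x)\in\{0,1\}$.

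The main obstacle is the last step: extracting $p(y|x)\in\{0,1\}$ from the knowledge that $\sum_x p(y|x)\tr{P_x\,\rho}\in\{0,1\}$ for all pure $\rho$, because the $P_x$ need not be orthogonal and their ranges overlap, so one cannot simply ``read off'' a single coefficient by choosing $\rho$. I would handle this by a careful choice of test states: pick a pure $\rho$ maximizing $\tr{P_x\,\rho}$ (an eigenvector of $P_x$ for its top eigenvalue $\mu_x$); along a path of such states, connectedness of the relevant eigenspace plus the discreteness of $\{0,1\}$ pins the value; then a perturbation argument comparing nearby states forces the individual $p(y|x)$ to be extremal, using that if some $0<p(y|x)<1$ one could tilt $\rho$ slightly toward the range of $P_x$ versus away from it and exit $\{0,1\}$, a contradiction. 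Alternatively, and perhaps more robustly, I would invoke the characterization already available from Petz sufficiency theory used elsewhere in the paper: the sharp POVM $\povm{Q}$ being a post-processing of $\povm{P}$ means the classical channel $p(y|x)$ is sufficient for the family $\{\bp_\rho\}$, and combined with sharpness this forces deterministic transitions — but I would prefer the elementary Pythagorean argument above to keep the lemma self-contained.
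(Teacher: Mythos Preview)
Your proposal contains a genuine error at its pivotal step. You claim that because $Q_y$ is a projection, ``$\tr{Q_y\,\rho}=\tr{Q_y^2\,\rho}$, which via Cauchy--Schwarz / the Pythagorean split above forces $\tr{Q_y\,\rho}\in\{0,1\}$ whenever $\rho$ is pure.'' This is false: for a projection $Q_y$ and a unit vector $|\psi\>$, the expectation $\<\psi|Q_y|\psi\>$ ranges over the \emph{entire} interval $[0,1]$, not just its endpoints (take e.g.\ $Q_y=|0\>\<0|$ and $|\psi\>=\tfrac{1}{\sqrt2}(|0\>+|1\>)$). The identity $Q_y=Q_y^2$ gives you nothing beyond the Pythagorean decomposition you already wrote down, and that decomposition in no way constrains $\<\psi|Q_y|\psi\>$ to be extremal. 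Consequently everything downstream --- the connectedness/continuity argument, the perturbation to ``exit $\{0,1\}$'' --- is built on a false premise and collapses.

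The paper's argument works in the opposite direction from yours: instead of testing with vectors in the range of $P_x$, it fixes $y$ and picks a unit vector $|\phi_y\>$ in the range of the \emph{projection} $Q_y$, so that $\<\phi_y|Q_y|\phi_y\>=1$ exactly. Then
\[
1=\sum_{x:\,p(y|x)>0}p(y|x)\,\<\phi_y|P_x|\phi_y\>\le\sum_{x:\,p(y|x)>0}\<\phi_y|P_x|\phi_y\>\le\sum_x\<\phi_y|P_x|\phi_y\>=1,
\]
forcing equality throughout; hence $p(y|x)=1$ whenever $\<\phi_y|P_x|\phi_y\>\ne 0$. For a given $x$ one can always find \emph{some} $y$ and $|\phi_y\>$ in the range of $Q_y$ with $\<\phi_y|P_x|\phi_y\>\ne0$ (since the ranges of the $Q_y$ span $\sH$), yielding $p(y|x)=1$ for that $y$ and therefore $p(y'|x)=0$ for all $y'\ne y$ by normalization. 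The key idea you were missing is to exploit the one situation where a projection's expectation \emph{is} pinned to an extreme value: on vectors in its own range.
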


\begin{proof}
    Fix an arbitrary element $y \in \mathcal{Y}$ and let $|\phi_{y}\rangle$ be a unit vector satisfying the condition $\langle \phi_{y} | Q_{y} | \phi_{y} \rangle = 1$. Then, we have
    \begin{align}
        1&=\braket{\phi_{y}|Q_{y}|\phi_{y}}\\
        &=\sum_{x:p(y|x)>0}p(y|x)\braket{\phi_{y}|P_x|\phi_{y}}\\
        &\le\sum_{x:p(y|x)>0}\braket{\phi_{y}|P_x|\phi_{y}}\\
        &\le 1\;.
    \end{align}
    Without loss of generality, we assume $\braket{\phi_{y}|P_x|\phi_{y}} \ne 0$ for $x\in\set{X}$. When $\braket{\phi_{y}|P_x|\phi_{y}}=0$, take $|\phi_{\tilde{y}}\rangle$ such that $\langle \phi_{\tilde{y}} | Q_{\tilde{y}} | \phi_{\tilde{y}} \rangle = 1$ and $\braket{\phi_{\tilde{y}}|P_x|\phi_{\tilde{y}}}\neq0$. The existence of such $|\phi_{\tilde{y}}\rangle$ follows from the fact that the eigenvectors with eigenvalue 1 of the elements of $\{Q_y\}_{y\in\set{Y}}$ form an orthonormal basis. Therefore, from $p(y|x)\braket{\phi_{y}|P_x|\phi_{y}}=\braket{\phi_{y}|P_x|\phi_{y}}$, we obtain $p(y|x)=1$. Since $\sum_{y}p(y|x)=1$, it follows that $p(y'|x)=0$ for any $y'\neq y$. Thus, 
    \begin{align}
        p(y|x)\in\{0,1\}\;
    \end{align}
    for all $y\in\set{Y}$ and $x\in\set{X}$.
\end{proof}

\begin{proof}[Proof of Theorem \ref{th:macro-uniform}]
    By definition, $\macro$ is a macroscopic state if and only if
    \begin{align}\label{eq:S5}
        D\left(\macro\middle\|u\right)=D\left(\mP(\macro)\middle\|\mP(u)\right).
    \end{align}
    As proved in~\cite{buscemi2022observational}, the above condition is equivalent to
    \begin{align}
        \macro=\sum_x\tr{P_x\;\macro}\frac{P_x}{V_x}\;.
        \label{eq:conditionmmacro}
    \end{align}
    This conclusion is obtained simply by noticing that the set of macroscopic states coincides with the set of fixed point of the composite channel $[\mcl{R}_{\mP, u} \circ \mP](\cdot)$, where $\mcl{R}_{\mP, u}$ is the Petz transpose map~\cite{petz1986sufficient,petz1988sufficiency} of $\mP(\cdot):=\sum_x\tr{P_x\;\cdot}\ketbra{x}$, computed with respect to the maximally mixed prior $u$, that is
    \begin{align}
        \mcl{R}_{\mP, u}(\cdot):=\sum_x\bra{x}\cdot\ket{x}\;\frac{P_x}{\tr{P_x}}\;.
    \end{align}
    It is easy to verify that
    \begin{align}
        [\mcl{R}_{\mP, u} \circ \mP](\cdot)=\sum_x\tr{P_x\;\cdot}\frac{P_x}{\tr{P_x}}\;.
    \end{align}
    According to the terminology introduced in~\cite{buscemi2022observational}, $[\mcl{R}_{\mP, u} \circ \mP](\rho)$ is the coarse-grained version of $\rho$, done with respect to the measurement $\povm{P}$. A state is macroscopic for $\povm{P}$ if and only if it coincides with its own coarse-grained version.
    
    
    The condition (\ref{eq:conditionmmacro}) is satisfied for $\macro':=\sum_yc_y\Pi_y$. Indeed, we have
    \begin{align}
        \sum_x\tr{P_x\;\macro'}\frac{P_x}{V_x}&=\sum_y c_y\sum_x\tr{P_x\Pi_y}\frac{P_x}{V_x}\\
        &=\sum_y c_y\sum_{x:{\rm supp}{P_x}\subset{\rm supp}\Pi_y}\tr{P_x}\frac{P_x}{V_x}\\
        &=\sum_yc_y\Pi_y\\
        &=\macro'\;,
    \end{align}
    where the second and the third line follows from Lemma \ref{lemma:groupings}.
    
    Conversely, suppose that $\macro$ satisfies Eq.~\eqref{eq:S5} and let $\{|e_k\rangle\}_k$ be an eigenbasis of $\macro$. Correspondingly, let us define the map
    \begin{align*}
    \diag_{\macro}(X):=\sum_k\ketbra{e_k}X\ketbra{e_k}\;.    
    \end{align*}
    As shown in~\cite[Section 4.2]{petz2003monotonicity}, Eq.~\eqref{eq:S5} implies that
    \begin{align}\label{eq:to-be-summed}
        \macro\Pi_y=c_y\Pi_y\;,
    \end{align}
    where $\Pi_y:=\sum_{x\in\set{X}_y}\diag_{\macro}(P_x)$, for some partition of $\set{X}$ into disjoint subsets as $\set{X}=\set{X}_1\cup\set{X}_2\cup\cdots\cup\set{X}_\ell$, such that $\diag_{\macro}(P_x)\diag_{\macro}(P_{x'})=0$ whenever $x$ and $x'$ do not belong to the same subset. Notice that, in general, such a partition is not uniquely defined.
    
    It is easy to see that the operators $\Pi_y$ are, by construction, positive semi-definite and orthogonal to each other. Moreover, since $\openone=\diag_{\macro}(\openone)=\diag_{\macro}(\sum_xP_x)=\sum_x\diag_{\macro}(P_x)$, the set  $\povm{\Pi}=\{\Pi_y\}_y$ in fact constitutes a PVM. Thus, by summing~\eqref{eq:to-be-summed} over $y$, we immediately obtain Eq.~\eqref{eq:macro-explicit} in the main text, i.e.,
    \begin{align*}
        \macro=\sum_yc_y\Pi_y\;.
    \end{align*}
    
    We still need to show that, for any such a partition, the corresponding PVM $\povm{\Pi}$ satisfies $\povm{\Pi}\preceq\povm{P}$. This is due to the fact that, since $\ker\diag_{\macro}(P_x)\subseteq\ker P_x$, i.e., $\supp\diag_{\macro}(P_x)\supseteq\supp P_x$, two POVM elements $P_x$ and $P_{x'}$ must have orthogonal supports, whenever $x$ and $x'$ belong to different subsets. Thus, not only $\Pi_y=\sum_{x\in\set{X}_y}\diag_{\macro}(P_x)$, but in fact $\Pi_y=\sum_{x\in\set{X}_y}P_x$, that is $\povm{\Pi}\preceq\povm{P}$.

\end{proof}

In particular, if $\macro$ is macroscopic for $\povm{P}=\{P_x\}$, then
\begin{align*}
    [\macro,P_x]=0\;,
\end{align*}
for all $P_x\in\povm{P}$. In order to show this, for any PVM $\povm{\Pi}=\{\Pi_y\}$ such that $\povm{\Pi}\preceq\povm{P}$, as a consequence of Lemma~\ref{lemma:groupings} we have $\Pi_y:=\sum_{x\in\set{X}_y}P_x$, where $\set{X}_y$ are disjoint subsets covering $\set{X}$. Then, $\supp\Pi_y\supseteq\supp P_x$ for all $x\in\set{X}_y$, and being $\Pi_y$ a projection, we see that $\Pi_y$ acts as the identity operator on the supports of all the $P_x$'s it comprises. Moreover, it acts as the null operator on the support of all remaining $P_x$'s. Thus, any $\Pi_y$ commutes with any $P_x$, though we notice that the operators $P_x$ need not commute with each other. As a consequence, since macroscopic states are just linear combinations of $\Pi_y$'s, they all commute with all $P_x$'s.

\section{Concentration of observational entropy}
\label{appendix:concentration}

The proof strategy that we use here is to show that, for a unitary sampled at random, the probability that $\tr{U\rho U^\dag\; P}$ is ``close'' to $\tr{u\;P}$, for any density matrix $\rho$ and any non-null effect $0\le P \le\openone$, is ``high''. We then use this fact to bound the probability of large deviations in observational entropy.

\subsection{From probability to observational entropy}

\begin{lemma}\label{lemma:concg}
Let $\rho$ be a density operator and $\povm{P}=\{P_x\}_{x\in\mathcal{X}}$ a POVM such that $|\mathcal{X}|<+\infty$. 
Let $\nu$ denote a measure on the $d$-dimensional unitary group $\mathcal{U}_d$ and $\mathbb{P}_\nu$ the corresponding probability of an event.
Suppose that there exists a real-valued non-increasing function $g$ such that
\begin{align}
    \mathbb{P}_\nu\left\{\left|{\rm Tr}[U\rho U^\dagger P_x]-{\rm Tr}[u P_x]\right|\geq\xi\right\}\leq g(\xi)\;,
\end{align}
for all $x\in\mathcal{X}$ and $\xi>0$. Then, it holds that
\begin{align}
    \mathbb{P}_\nu\left\{S_{\povm{P}}(U\rho U^\dagger)\leq(1-\delta)\log{d}\right\}\leq \frac{1}{\kappa(\povm{P})}\:g\left(\kappa(\povm{P})\sqrt{\delta\log{d}}\right),\label{eq:nuNg}
\end{align}
where $\kappa(\povm{P}):=\min_{x\in\mathcal{X}}{\rm Tr}[u P_x]$.

\begin{proof}
    Define probability distributions $\mathbf{p}=\{p_x\}_x$ and $\mathbf{q}=\{q_x\}_x$ by $p_x={\rm Tr}[U\rho U^\dagger P_x]$ and $q_x={\rm Tr}[u P_x]$. By the definition of the observational entropy, we have
    \begin{align}
        \log{d}-S_{\povm{P}}(U\rho U^\dagger)=D_{\rm KL}(\mathbf{p}\|\mathbf{q})\;,
    \end{align}
    where $D_{\rm KL}$ is the Kullback-Leibler divergence for classical probability distributions defined by $D_{\rm KL}(\mathbf{p}\|\mathbf{q})=\sum_xp_x\log{(p_x/q_x)}$.
    The condition $S_{\povm{P}}(U\rho U^\dagger)\leq(1-\delta)\log{d}$ is thus equivalent to $D_{\rm KL}(\mathbf{p}\|\mathbf{q})\geq\delta\log{d}$. 
    We will invoke the following upper bound on $D_{\rm KL}$: 
    \begin{align}\label{eq:csiszar}
        D_{\rm KL}(\mathbf{p}\|\mathbf{q})\leq\sum_x\frac{(p_x-q_x)^2}{q_x}\leq\frac{|\mathcal{X}|\max_x(p_x-q_x)^2}{\min_xq_x}\;,
    \end{align}
    where the first inequality was proved in~\cite{csiszar2006context} (see Lemma A.3 therein).
    The condition $D_{\rm KL}(\mathbf{p}\|\mathbf{q})\geq\delta\log{d}$ then implies
    \begin{align}
        \frac{|\mathcal{X}|\max_x(p_x-q_x)^2}{\kappa(\povm{P})}\geq\delta\log{d},
    \end{align}
    which is equivalent to
    \begin{align}
        &\max_x\left|{\rm Tr}[U\rho U^\dagger P_x]-{\rm Tr}[u P_x]\right|\geq\sqrt{\frac{\kappa(\povm{P})\delta\log{d}}{|\mathcal{X}|}}.
    \end{align}
    Thus, we have
    \begin{align}
        \mathbb{P}_\nu\left\{S_{\povm{P}}(U\rho U^\dagger)\leq(1-\delta)\log{d}\right\}
        &\leq\mathbb{P}_\nu\left\{\max_x\left|{\rm Tr}[U\rho U^\dagger P_x]-{\rm Tr}[u P_x]\right|\geq\sqrt{\frac{\kappa(\povm{P})\delta\log{d}}{|\mathcal{X}|}}\right\}\\
        &\leq\sum_x\mathbb{P}_\nu\left\{\left|{\rm Tr}[U\rho U^\dagger P_x]-{\rm Tr}[u P_x]\right|\geq\sqrt{\frac{\kappa(\povm{P})\delta\log{d}}{|\mathcal{X}|}}\right\}\\
        &\leq |\mathcal{X}|\:g\left(\sqrt{\frac{\kappa(\povm{P})\delta\log{d}}{|\mathcal{X}|}}\right)\\
        &\leq \frac{1}{\kappa(\povm{P})}\:g\left(\kappa(\povm{P})\sqrt{\delta\log{d}}\right),
    \end{align}
    where the last line is due to $|\mathcal{X}|\kappa(\povm{P})\leq1$.
\end{proof}

\end{lemma}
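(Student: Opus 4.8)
The plan is to reduce the statement about observational entropy to a purely classical statement about the two outcome distributions induced by $\povm{P}$, and then feed the hypothesized tail bound $g$ into it. Write $\bp=\{p_x\}_x$ and $\bq=\{q_x\}_x$ with $p_x=\tr{U\rho U^\dagger P_x}$ and $q_x=\tr{u\,P_x}$. By the Background identity, $S_{\povm{P}}(U\rho U^\dagger)=\log d-D(\mP(U\rho U^\dagger)\|\mP(u))$, and since $\mP(U\rho U^\dagger)$ and $\mP(u)$ are diagonal in the same (classical outcome) basis, the Umegaki relative entropy collapses to the classical $\Dkl$. Hence the event $\{S_{\povm{P}}(U\rho U^\dagger)\le(1-\delta)\log d\}$ is \emph{exactly} the event $\{\Dkl(\bp\|\bq)\ge\delta\log d\}$, and it suffices to bound $\mathbb{P}_\nu\{\Dkl(\bp\|\bq)\ge\delta\log d\}$ in terms of $g$.

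Next I would control $\Dkl(\bp\|\bq)$ by a quantity built from the sup distance $\max_x|p_x-q_x|$, which is what the hypothesis speaks about. The tool is the elementary chain $\Dkl(\bp\|\bq)\le\sum_x(p_x-q_x)^2/q_x\le|\mathcal{X}|\,\max_x(p_x-q_x)^2/\min_x q_x$, whose first step is the standard bound of relative entropy by the $\chi^2$-divergence (it follows from $\log t\le t-1$, and is also recorded in the literature, e.g.~\cite{csiszar2006context}). Since $\min_x q_x=\kappa(\povm{P})$ by definition, combining this with $\Dkl(\bp\|\bq)\ge\delta\log d$ forces $\max_x|p_x-q_x|\ge\sqrt{\kappa(\povm{P})\,\delta\log d/|\mathcal{X}|}$ on the bad event; so the bad event is contained in $\bigcup_{x\in\mathcal{X}}\{|p_x-q_x|\ge\sqrt{\kappa(\povm{P})\,\delta\log d/|\mathcal{X}|}\}$.

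The last step is a union bound over the finitely many outcomes, yielding $|\mathcal{X}|$ times $g$ evaluated at that threshold, followed by two applications of the POVM normalization $\sum_x q_x=1$. Together with $q_x\ge\kappa(\povm{P})$ this gives $|\mathcal{X}|\,\kappa(\povm{P})\le1$, i.e.\ $|\mathcal{X}|\le1/\kappa(\povm{P})$; this replaces the prefactor $|\mathcal{X}|$ by $1/\kappa(\povm{P})$, and — because $g$ is non-increasing and the threshold satisfies $\sqrt{\kappa(\povm{P})\,\delta\log d/|\mathcal{X}|}\ge\kappa(\povm{P})\sqrt{\delta\log d}$ — it also lets me relax the argument of $g$ down to $\kappa(\povm{P})\sqrt{\delta\log d}$, producing exactly~\eqref{eq:nuNg}. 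I do not expect a genuine obstacle here: the argument is short once the relative-entropy/$\chi^2$ inequality is in hand, and the only points requiring care are keeping every inequality direction consistent (in particular, invoking the monotonicity of $g$ in the direction that \emph{weakens} the bound) and making sure the version of $\Dkl(\bp\|\bq)\le\chi^2(\bp\|\bq)$ one cites carries no hidden support or normalization hypotheses — if one prefers, this inequality can be proved in one line from $\log t\le t-1$, making the proof self-contained.
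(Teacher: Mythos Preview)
Your proposal is correct and follows essentially the same route as the paper's own proof: rewrite the event in terms of $\Dkl(\bp\|\bq)$, bound $\Dkl$ by the $\chi^2$-divergence and then by $|\mathcal{X}|\max_x(p_x-q_x)^2/\kappa(\povm{P})$, apply a union bound with the tail estimate $g$, and finish using $|\mathcal{X}|\kappa(\povm{P})\le 1$ together with the monotonicity of $g$. Your exposition is in fact slightly more explicit than the paper's in the final step, spelling out separately how $|\mathcal{X}|\le 1/\kappa(\povm{P})$ is used once for the prefactor and once (via monotonicity of $g$) for the argument.
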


\subsection{Concentration with Haar-random unitaries (Proof of Theorem \ref{th:concentration})}

\begin{lemma}\label{lemma:unitarylevy}
(Lemma 3.2 in \cite{low2009large}, see also \cite{ledoux-2001,milman1986asymptotic})
Let $f$ be a Lipschitz function on $\mathcal{U}_d$, with the Lipschitz constant $\eta$ defined by
\begin{align}
    \eta=\sup_{U_1\neq U_2\in\mathcal{U}_d}\frac{|f(U_1)-f(U_2)|}{\|U_1-U_2\|_2}\;.
\end{align}
Then
\begin{align}
    \mathbb{P}_H(|f-\mathbb{E}_H[f]|\geq\xi)\leq4\exp\left(-\frac{2d}{9\pi^3\eta^2}\xi^2\right),
\end{align}
where $\mathbb{P}_H$ denotes the probability computed with respect to the Haar (unitarily invariant) measure on $\mathcal{U}_d$, and $\mathbb{E}_H[f]$ denotes the corresponding mean of $f$.
\end{lemma}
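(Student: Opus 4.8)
The plan is to deduce this from the \emph{concentration of measure phenomenon} on manifolds of positive curvature, by comparison with the round sphere. Recall L\'evy's lemma on $S^{n}$ (radius $1$): any $1$-Lipschitz $f:S^n\to\mathbb{R}$ satisfies $\mathbb{P}\{|f-\mathbb{E}f|\ge\xi\}\le 2\exp(-(n-1)\xi^2/2)$. By the Gromov--L\'evy comparison theorem (equivalently the Bakry--\'Emery/Lichnerowicz estimate), the same Gaussian tail, now in the geodesic distance, holds on any compact $n$-dimensional Riemannian manifold with $\mathrm{Ric}\ge(n-1)\kappa\,g$, $\kappa>0$, with concentration rate $\sim n\kappa$. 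The strategy is to apply this to $M=\mathcal{U}_d$ and then convert back to the chordal Lipschitz condition in which the lemma is phrased.

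First I would equip $\mathcal{U}_d$, of real dimension $n=d^2$, with the bi-invariant metric induced by $\langle X,Y\rangle=\mathrm{Tr}[X^\dagger Y]$ on its Lie algebra, and use the standard fact that for a compact simple group with bi-invariant metric the Ricci form is a fixed positive multiple of the metric --- on the $\mathfrak{su}(d)$ directions one gets $\mathrm{Ric}=\Theta(d)\,g$ --- so that the Gromov--L\'evy rate is of order $n\kappa=\Theta(d)$, exactly the linear-in-$d$ exponent in the claim. Then I would reconcile distances: the lemma measures $f$ in the chordal norm $\|U_1-U_2\|_2$, whereas the comparison theorem lives in the geodesic distance; since chord $\le$ arc an $\eta$-HS-Lipschitz function is automatically $\eta$-geodesic-Lipschitz, but the metric normalization that makes the Ricci bound a clean multiple of $d$, together with the (angular, hence $\pi$-laden) diameter of $\mathcal{U}_d$, is what injects the factors of $\pi$ into $\frac{2}{9\pi^3}$. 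Finally, feeding $n=d^2$, the Ricci bound, and these normalizations into Gromov--L\'evy, and passing from the median to the mean (which only inflates the prefactor, absorbed into the ``$4$''), yields $\mathbb{P}_H\{|f-\mathbb{E}_H f|\ge\xi\}\le 4\exp(-\frac{2d}{9\pi^3\eta^2}\xi^2)$.

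The main obstacle is that $\mathcal{U}_d$ is \emph{not} positively curved: up to a finite cover it splits as $U(1)\times SU(d)$, and the central $U(1)$ factor is flat, so $\mathrm{Ric}\ge(n-1)\kappa g$ with $\kappa>0$ genuinely fails and Gromov--L\'evy cannot be invoked verbatim. The repair is to peel off the circle: it has bounded diameter (fixed by the metric normalization), hence carries its own dimension-independent Gaussian concentration, and a product of a concentrated space with a bounded-diameter space still concentrates; equivalently, prove the estimate on $SU(d)$ and lift it through the finite center. Either route forces one to juggle several constants at once --- the Ricci normalization of $\mathfrak{su}(d)$, the $(n-1)$-versus-$n$ slack in the comparison, the diameter/chord-arc factors, and the median-to-mean step --- and the deliberately lossy constant $\frac{2}{9\pi^3}$ is just what survives all of them. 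A route that sidesteps Riemannian geometry entirely is to transport the problem onto a genuine sphere: regard $U/\sqrt d$ as a unit vector in $\mathbb{C}^{d^2}\cong\mathbb{R}^{2d^2}$, apply L\'evy's lemma on $S^{2d^2-1}$, and control the bi-Lipschitz (non-isometric) distortion of this embedding --- which once more is the origin of the $\pi$'s.
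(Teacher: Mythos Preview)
The paper does not prove this lemma at all: it is stated as a quotation from the literature (Lemma~3.2 of Low~\cite{low2009large}, with pointers to Ledoux~\cite{ledoux-2001} and Milman--Schechtman~\cite{milman1986asymptotic}) and is used as a black box in the proof of Lemma~\ref{lemma:LipschitzUP}. So there is no ``paper's own proof'' to compare against; your write-up is effectively a sketch of what the cited references do.

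That said, your outline is essentially the standard argument and is sound at the level of a sketch. You correctly isolate the real issue, namely that $\mathcal{U}_d$ has a flat $U(1)$ center so Gromov--L\'evy does not apply to it directly, and both of your fixes (pass to $SU(d)$ and use its Ricci lower bound $\mathrm{Ric}=\tfrac{d}{2}g$ in the Hilbert--Schmidt normalization, then reattach the bounded-diameter circle; or embed into $S^{2d^2-1}$ and control the distortion) are the ones actually used in the literature. If you want to land on the specific constant $2/(9\pi^3)$ rather than merely $\Theta(d/\eta^2)$ in the exponent, you will have to follow Low's bookkeeping: the $\pi$'s enter through the chord--arc comparison $\|U_1-U_2\|_2\ge \tfrac{2}{\pi}d_g(U_1,U_2)$ on each eigen-angle and through the diameter of the $U(1)$ fiber, and the prefactor $4$ (rather than $2$) comes from replacing the median by the mean. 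Your sketch acknowledges all of these ingredients but does not actually carry out the arithmetic, so as written it establishes the bound only up to unspecified absolute constants; citing Low for the precise numbers, as the paper does, is the honest way to close that gap.
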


\begin{lemma}\label{lemma:LipschitzUP}
Let $\rho$ be a density operator and let $P$ be a positive semidefinite operator such that $P\leq \openone$.
For a Haar-distributed random unitary, it holds that
    \begin{align}
        \mathbb{P}_H\left\{\left|\tr{U\rho U^\dagger P}-\tr{uP}\right|\geq\xi\right\}\leq4\exp\left(-\frac{d\xi^2}{18\pi^3}\right)\;.\label{eq:HaarconcUP}
    \end{align}
\end{lemma}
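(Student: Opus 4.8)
The plan is to combine the Lévy-type concentration bound of Lemma~\ref{lemma:unitarylevy} with two elementary estimates: a bound on the Lipschitz constant of the map $U\mapsto\tr{U\rho U^\dagger P}$, and a bound on the deviation of its Haar mean $\mathbb{E}_H[\tr{U\rho U^\dagger P}]$ from the target value $\tr{uP}$. Once these are in hand, the stated inequality~\eqref{eq:HaarconcUP} follows by a triangle-inequality splitting of the event $\{|\tr{U\rho U^\dagger P}-\tr{uP}|\geq\xi\}$.

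First I would compute the Haar mean. By unitary invariance of the Haar measure, $\mathbb{E}_H[U\rho U^\dagger]=u$, so in fact $\mathbb{E}_H[\tr{U\rho U^\dagger P}]=\tr{uP}$ \emph{exactly}. This means the mean-deviation term vanishes and no triangle inequality is actually needed --- the event in~\eqref{eq:HaarconcUP} is precisely the event $\{|f(U)-\mathbb{E}_H[f]|\geq\xi\}$ for $f(U):=\tr{U\rho U^\dagger P}$, and Lemma~\ref{lemma:unitarylevy} applies directly. (If one preferred not to invoke exactness, the same conclusion follows by absorbing a vanishing mean-shift.)

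Second I would bound the Lipschitz constant $\eta$ of $f$ with respect to the Hilbert--Schmidt norm $\|\cdot\|_2$ on $\mathcal{U}_d$. Writing $f(U_1)-f(U_2)=\tr{(U_1\rho U_1^\dagger-U_2\rho U_2^\dagger)P}$ and adding and subtracting $U_1\rho U_2^\dagger$, one gets $U_1\rho U_1^\dagger-U_2\rho U_2^\dagger=U_1\rho(U_1-U_2)^\dagger+(U_1-U_2)\rho U_2^\dagger$. Taking the trace against $P$ and using $|\tr{AB}|\leq\|A\|_\infty\|B\|_1$ together with $\|P\|_\infty\leq1$, $\|U_i\|_\infty=1$, and $\|\rho\|_1=1$, followed by $\|X\|_1\leq\sqrt{d}\,\|X\|_2$ applied to $X=U_1-U_2$ (or a slightly more careful argument using that $\rho$ has trace norm $1$), yields $|f(U_1)-f(U_2)|\leq 2\|U_1-U_2\|_2$ up to a constant, so $\eta\leq2$ suffices. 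Plugging $\eta=2$ into Lemma~\ref{lemma:unitarylevy} gives exponent $-\frac{2d}{9\pi^3\cdot4}\xi^2=-\frac{d\xi^2}{18\pi^3}$, which is exactly the claimed rate.

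The only delicate point is getting the Lipschitz constant small enough: a naive bound $\|U_1\rho U_1^\dagger-U_2\rho U_2^\dagger\|_1\leq 2\|U_1-U_2\|_\infty$ controls the operator norm difference but one must pass to $\|\cdot\|_2$, and a careless step loses a factor of $\sqrt{d}$ that would destroy the dimensional dependence. The fix is to keep $\rho$ on the trace-norm side of every Hölder pairing (since $\|\rho\|_1=1$ is dimension-free) and only ever bound $\|U_1-U_2\|_\infty\leq\|U_1-U_2\|_2$; this way no dimensional factor appears and $\eta\leq 2$ holds. With that, the proof is a direct application of Lemma~\ref{lemma:unitarylevy}.
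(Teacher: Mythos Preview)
Your proof is correct and follows the same route as the paper: identify $\mathbb{E}_H[f]=\tr{uP}$ exactly, bound the Lipschitz constant of $f(U)=\tr{U\rho U^\dagger P}$ by $2$, and apply Lemma~\ref{lemma:unitarylevy}. The only cosmetic difference is in the Lipschitz estimate --- the paper uses the symmetric splitting $U_1\rho U_1^\dagger-U_2\rho U_2^\dagger=\tfrac12(U_1+U_2)\rho(U_1-U_2)^\dagger+\tfrac12(U_1-U_2)\rho(U_1+U_2)^\dagger$ together with Cauchy--Schwarz (obtaining the slightly sharper $\eta\le2\sqrt{\tr{\rho^2}}$), while your asymmetric telescoping plus $\|U_1-U_2\|_\infty\le\|U_1-U_2\|_2$ reaches the same $\eta\le2$ and hence the identical exponent.
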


\begin{proof}
Due to Lemma \ref{lemma:unitarylevy}, it suffices to prove that the Lipschitz constant of the function $f:U\mapsto{\rm Tr}[U\rho U^\dagger P]$ is bounded above by $2$.
By the triangle inequality and the Cauchy–Schwarz inequality, we have
\begin{align}
    \left|{\rm Tr}[U_1\rho U_1^\dagger P]-{\rm Tr}[U_2\rho U_2^\dagger P]\right|
    &=\left|\frac{1}{2}{\rm Tr}[(U_1+U_2)\rho (U_1-U_2)^\dagger P]+\frac{1}{2}{\rm Tr}[(U_1-U_2)\rho (U_1+U_2)^\dagger P]\right|\\
    &\leq\frac{1}{2}\left|{\rm Tr}[(U_1+U_2)\rho (U_1-U_2)^\dagger P]\right|+\frac{1}{2}\left|{\rm Tr}[(U_1-U_2)\rho (U_1+U_2)^\dagger P]\right|\\
    &\leq \left\|U_1-U_2\|_2\cdot\|P(U_1+U_2)\rho\right\|_2.
\end{align}
Taking a diagonal decomposition $\rho=\sum_i\lambda_i\ketbra{e_i}$, the second term in the last line can be bounded as
\begin{align}
    \left\|P(U_1+U_2)\rho\right\|_2^2&=\tr{\rho(U_1+U_2)^\dagger P^2(U_1+U_2)\rho}\\
    &\leq\tr{\rho(U_1+U_2)^\dagger (U_1+U_2)\rho}\\
    &=\sum_{i=1}^d \lambda_i^2\langle e_i|(U_1+U_2)^\dagger (U_1+U_2)|e_i\rangle\\
    &\leq\sum_{i=1}^d \lambda_i^2\left\|(U_1+U_2)\right\|_\infty^2\\
    &\leq4\tr{\rho^2}\;.
\end{align}
This implies
\begin{align}
    \frac{\left|{\rm Tr}[U_1\rho U_1^\dagger P]-{\rm Tr}[U_2\rho U_2^\dagger P]\right|}{\left\|U_1-U_2\right\|_2}\leq2\sqrt{\tr{\rho^2}}\leq2\;,
\end{align}
and completes the proof.
\end{proof}

\begin{proof}[Proof of Theorem \ref{th:concentration}]
It follows from Lemma \ref{lemma:LipschitzUP} that
\begin{align}
        \mathbb{P}_H\left\{\left|\tr{U\rho U^\dagger P_x}-\tr{uP_x}\right|\geq\xi\right\}\leq4\exp\left(-\frac{d\xi^2}{18\pi^3}\right)
    \end{align}
for all $x$.
Applying Lemma \ref{lemma:concg}, we obtain 
\begin{align}
    \mathbb{P}_H\left\{S_{\povm{P}}(U\rho U^\dagger)\leq(1-\delta)\log{d}\right\}&\leq  \frac{4}{\kappa(\povm{P})}\exp\left(-\frac{\delta}{18\pi^3}\kappa(\povm{P})^{2}d\log{d}\right)\;,
\end{align}
where $\kappa(\povm{P}):=\min_{x\in\mathcal{X}}{\rm Tr}[u P_x]$.
\end{proof}

\subsection{Concentration with approximate unitary designs (Proof of Theorem \ref{th:con1})}

Next, we consider the concentration of observational entropy by approximate designs.
Various definitions of approximate unitary designs have been proposed, depending on what measure is used to define ``approximate''.
We adopt the following version proposed in \cite{brandao2021models}:
\begin{definition}(Definition 4 in \cite{brandao2021models})\label{def:diamond2}
    Fix $\ep>0$ and $t\in\mathbb{N}$. A unitary ensemble $\mathcal{E}:=\{p_i, U_i\}_{i=1}^{N}$ is an $\ep$-approximate (unitary) $t$-design in the diamond distance if 
    \begin{align}
        \norm{M^{(t)}_{\mathcal{E}}-M^{(t)}_{H}}_{\diamond}\leq\frac{t!}{d^{2t}}\ep\;.
    \end{align}
    Here, $M^{(t)}_{\mathcal{E}}(X):=\sum_{i=1}^{n}p_i U_{i}^{\ot t}X(U_{i}^{\dag})^{\ot t}$ and $M^{(t)}_{H}(X):=\mathbb{E}[U^{\ot t}X(U^{\dag})^{\ot t}]$.
\end{definition}

\begin{lemma}\label{lemma:UPconcHt2}
    Let $\rho$ be a density operator and let $P$ be a positive semidefinite operator such that $P\leq \openone$. It holds that
    \begin{align}
        \mathbb{P}_{{\mE}}\left\{\left|\tr{U\rho U^\dagger P}-\tr{uP}\right|\geq\xi\right\}\leq\frac{1}{\xi^{t}}(1+\ep)\left(\frac{t^{2}}{d}\right)^{\frac{t}{2}}.
    \end{align}
    Here, $U$ is sampled from an $\ep$-approximate $t$-design in the diamond distance.
\end{lemma}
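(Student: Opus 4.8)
The plan is to bound a single $t$-th moment and then invoke Markov's inequality. Write $X:=\tr{U\rho U^\dagger P}-\tr{uP}$. The first move is to recenter: since $UuU^\dagger=u$ for every unitary $U$, we have $X=\tr{U\Delta U^\dagger P}$ with $\Delta:=\rho-u$, which is Hermitian, traceless ($\tr{\Delta}=0$, so $\mathbb{E}_H[X]=0$), and obeys $\norm{\Delta}_1\le\norm{\rho}_1+\norm{u}_1=2$ and $\norm{\Delta}_\infty\le1$. The argument is cleanest for even $t$ — which in particular covers the case $t=2$ needed in Theorem~\ref{th:con1}; then $|X|^t=X^t$ and Markov's inequality gives $\mathbb{P}_{\mE}\{|X|\ge\xi\}\le\xi^{-t}\,\mathbb{E}_{\mE}[X^t]$. (For odd $t$, $|X|^t$ is no longer a polynomial in the entries of $U$ and a little extra care is needed to preserve the power $\xi^{-t}$; I do not pursue this here.) Thus it suffices to prove $\mathbb{E}_{\mE}[X^t]\le(1+\ep)(t^2/d)^{t/2}$.

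The second move rewrites the moment through the $t$-th moment operator and isolates the design error. Since $X^t=\tr{U^{\ot t}\,\Delta^{\ot t}\,(U^\dagger)^{\ot t}\,P^{\ot t}}$, taking expectations yields $\mathbb{E}_{\mE}[X^t]=\tr{M^{(t)}_{\mE}(\Delta^{\ot t})\,P^{\ot t}}$, and likewise with $H$ in place of $\mE$. Splitting $M^{(t)}_{\mE}=M^{(t)}_H+(M^{(t)}_{\mE}-M^{(t)}_H)$, using $|\tr{Y\,P^{\ot t}}|\le\norm{Y}_1\norm{P^{\ot t}}_\infty=\norm{Y}_1$ and $\norm{(M^{(t)}_{\mE}-M^{(t)}_H)(\Delta^{\ot t})}_1\le\norm{M^{(t)}_{\mE}-M^{(t)}_H}_\diamond\norm{\Delta^{\ot t}}_1$, together with the defining bound of Definition~\ref{def:diamond2} and $\norm{\Delta^{\ot t}}_1=\norm{\Delta}_1^t\le2^t$, the second contribution is of size at most $(t!\,\ep/d^{2t})\,2^t$. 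One checks this is $\le\ep\,(t^2/d)^{t/2}$ for every $d\ge2$, using $t!\le t^t$ and $2^t\le d^{3t/2}$. It remains to establish the Haar bound $\mathbb{E}_H[X^t]=\tr{M^{(t)}_H(\Delta^{\ot t})P^{\ot t}}\le(t^2/d)^{t/2}$.

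This Haar estimate is the crux. I would evaluate it via Weingarten calculus: $\mathbb{E}_H[X^t]=\sum_{\sigma,\tau\in S_t}\mr{Wg}(\sigma^{-1}\tau,d)\,\tr{W_\sigma\Delta^{\ot t}}\,\tr{W_\tau P^{\ot t}}$, where $W_\pi$ is the permutation operator on $\sH^{\ot t}$. The decisive point is that $\tr{W_\sigma\Delta^{\ot t}}=\prod_{\text{cycles }c\text{ of }\sigma}\tr{\Delta^{|c|}}$ \emph{vanishes} whenever $\sigma$ has a fixed point, because $\tr{\Delta}=0$; hence only derangements contribute, and since all their cycles have length $\ge2$, each such $\sigma$ has at most $t/2$ cycles. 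Feeding this into the triangle inequality for the permutation metric, $\#\mr{cyc}(\tau)+\#\mr{cyc}(\sigma^{-1}\tau)\le t+\#\mr{cyc}(\sigma)\le3t/2$, forces the extra powers of $1/d$; combined with the elementary bounds $|\tr{\Delta^m}|\le\tr{\Delta^2}\le1$ (from $\norm{\Delta}_\infty\le1$), $0\le\tr{P^m}\le\tr{P}\le d$, and a standard magnitude estimate for $\mr{Wg}$, one arrives at $\mathbb{E}_H[X^t]\le(t^2/d)^{t/2}$. For the single order actually needed, $t=2$, this collapses to the one-line second-moment computation $\mathbb{E}_H[U^{\ot2}\Delta^{\ot2}(U^\dagger)^{\ot2}]=a\,\openone+b\,W_{(12)}$ with $a<0$ and $b=\tr{\Delta^2}/(d(d-1))$, so $\mathbb{E}_H[X^2]=a(\tr{P})^2+b\,\tr{P^2}\le b\,\tr{P^2}\le b\,d=\tr{\Delta^2}/(d-1)\le1/(d-1)\le4/d$.

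The main obstacle is keeping the constant in the general-$t$ Weingarten estimate under control: the crude ``at most $(t!)^2$ summands'' count overshoots by a factor $\sim t^t$, so one must either track the cycle structure of the surviving permutations with care, or — more cleanly — invoke an off-the-shelf Haar moment bound of exactly this shape, such as those in~\cite{low2009large}. Assembling the Haar term with the design-error term then gives $\mathbb{E}_{\mE}[X^t]\le(1+\ep)(t^2/d)^{t/2}$, and Markov's inequality delivers the claimed bound $\mathbb{P}_{\mE}\{|\tr{U\rho U^\dagger P}-\tr{uP}|\ge\xi\}\le\xi^{-t}(1+\ep)(t^2/d)^{t/2}$.
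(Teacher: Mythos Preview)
Your approach is genuinely different from the paper's and carries a nice idea --- the derangement reduction via $\tr{\Delta}=0$ --- but as written it is incomplete in two places.

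The paper's proof bypasses the Weingarten computation entirely. After Markov's inequality on $|X|^t$ (note the absolute value, valid for all $t$), it writes $\rho=\sum_i\lambda_i\ketbra{i}$ and uses the convexity of $z\mapsto|z|^t$ to obtain
\[
\mathbb{E}_{\mE}\bigl[|X|^t\bigr]\le\sum_i\lambda_i\,\mathbb{E}_{\mE}\bigl[\bigl|\tr{U\ketbra{i}U^\dagger P}-\tr{uP}\bigr|^t\bigr].
\]
Each pure-state term is then bounded by $(1+\ep)(t^2/d)^{t/2}$ by directly citing Corollary~24 of \cite{brandao2021models}, which already packages both the Haar moment estimate and the design-error correction. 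Three lines, no Weingarten.

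Your route, by contrast, leaves two gaps. The odd-$t$ case is simply not handled: you note that $|X|^t$ is then not a polynomial in the entries of $U$ and ``do not pursue this here'', but the lemma is stated for all $t$. More substantively, the Haar bound $\mathbb{E}_H[X^t]\le(t^2/d)^{t/2}$ for general even $t$ is asserted rather than proved: you correctly observe that the naive $(t!)^2$-term Weingarten sum overshoots, and then defer to ``an off-the-shelf Haar moment bound \ldots\ such as those in~\cite{low2009large}''. But those results are formulated for pure-state inputs, which is exactly what the paper exploits via the convexity step; extracting the same constant for your traceless $\Delta$ from the derangement structure is work you have not done. (Your explicit $t=2$ computation is fine and suffices for Theorem~\ref{th:con1}.)

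What your approach would buy, if completed, is a self-contained argument that does not outsource the hard estimate to \cite{brandao2021models}; what it costs is the Weingarten bookkeeping and a separate treatment of odd $t$. The paper's convexity-plus-citation route is shorter and covers all $t$ uniformly.
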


\begin{proof}
    By Markov's inequality, we have
     \begin{align}
        \mathbb{P}_{{\mE}}\left\{\left|\tr{U\rho U^\dagger P}-\tr{uP}\right|\geq\xi\right\}&=\mathbb{P}_{{\mE}}\left\{\left|\tr{U\rho U^\dagger P}-\tr{uP}\right|^t\geq\xi^t\right\}\\
        &\leq\frac1{\xi^{t}}\mathbb{E}_{\mE}\left[\left|\tr{U\rho U^\dagger P}-\tr{uP}\right|^{t}\right].
    \end{align}
    Due to the convexity of $f(z)=|z|^{t}$, with $\sum_{i=1}^{r}\lam_{i}\ketbra{i}$ be the spectral decomposition of $\rho$, we have
    \begin{align}
        \mathbb{E}_{\mE}\left[\left|\tr{U\rho U^\dagger P}-\tr{u\;P}\right|^{t}\right]
        &=\mathbb{E}_{\mE}\left[\left|\sum_{i=1}^{r}\lam_{i}(\tr{U\ketbra{i} U^\dagger P}-\tr{uP})\right|^{t}\right]\\      &\leq\sum_{i=1}^{r}\lam_{i}\mathbb{E}_{\mE}\left[\left|\tr{U\ketbra{i} U^\dagger P}-\tr{uP}\right|^{t}\right].
        \label{eq:lamEr}
    \end{align}
    We will invoke Corollary 24 in \cite{brandao2021models}:
    Given a pure state $\ket{\phi}$ and a positive semidefinite operator $P \leq \openone$, it holds 
    \begin{align}
        \mathbb{E}_\mE\left[\left|\tr{U\ketbra{\phi} U^\dagger P}-\tr{uP}\right|^t\right]\leq(1+\ep)\left(\frac{t^{2}}{d}\right)^{\frac{t}{2}}\;,
    \end{align}
    where $U$ is sampled from an $\ep$-approximate (in the diamond measure) $t$-design $\mE$.
    Applying this to each term in (\ref{eq:lamEr}), we complete the proof.
\end{proof}

\begin{proof}[Proof of Theorem \ref{th:con1}]
    It follows from Lemma \ref{lemma:concg} and Lemma \ref{lemma:UPconcHt2} that
    \begin{align}
        \mathbb{P}_{\mE}\left\{S_{\povm{P}}(U\rho U^\dagger)\leq(1-\delta)\log{d}\right\}
        &\leq\frac{1+\ep}{\kappa(\povm{P})}\left(\frac{t^{2}}{\kappa(\povm{P})^{2}\delta d\log{d}}\right)^{\frac{t}{2}}\;,\label{inq:2t-design2}
    \end{align}
    with $U$ being randomly sampled from an $\ep$-approximate  $t$-design $\mE$ in the diamond distance.  Substituting $t=2$ yields Ineq.~(\ref{eq:OE-concentration-design}).
\end{proof}

\begin{remark}
It was proved in \cite{harrow2023approximate} that approximate unitary designs can be implemented on $D$-dimensional lattices by a local random quantum circuit of depth polynomial in the lattice size.  
Though the measures adopted in \cite{harrow2023approximate} to define ``$\ep$-approximate'' are different from Definition \ref{def:diamond2} above, it affects the circuit depth only polynomially.
Hence, we conclude that the concentration of observational entropy occurs even under random unitaries generated by random polynomial-depth quantum circuits, which is often regarded as a physically more realistic model of local quantum chaotic dynamics. In other words, our setup is a toy model of the physical ETH many-body Hamiltonian~\cite{kaneko2020characterizing}.
\end{remark}

\end{document}